\begin{document}
\title{AuthOr: Lower Cost Authenticity-Oriented Garbling of Arbitrary Boolean Circuits}
%

%
\author{Osman Biçer \and Ali Ajorian}
\authorrunning{Biçer and Ajorian}
%
\institute{University of Basel, Switzerland}
\maketitle              
\begin{abstract}
Authenticity-oriented (previously named as \emph{privacy-free}) garbling schemes of Frederiksen et al. Eurocrypt '15 are designed to satisfy only the authenticity criterion of Bellare et al. ACM CCS '12, and to be more efficient compared to full-fledged garbling schemes.
In this work, we improve the state-of-the-art authenticity-oriented version of half gates (HG) garbling of Zahur et al. Crypto '15 by allowing it to be bandwidth-free if any of the input wires of an AND gate is freely settable by the garbler. Our full solution AuthOr then successfully combines the ideas from information-theoretical garbling of Kondi and Patra Crypto '17 and the HG garbling-based scheme that we obtained. AuthOr has a lower communication cost (i.e. garbled circuit or GC size) than HG garbling without any further security assumption.  Theoretically, AuthOr's GC size reduction over HG garbling lies in the range between 0 to 100\%, and the exact improvement depends on the circuit structure. We have implemented our scheme and conducted tests on various circuits that were constructed by independent researchers. Our experimental results show that in practice, the GC size gain may be up to roughly 98\%. 
\keywords{Garbled circuits \and Privacy-free garbling \and Verifiable computing \and Zero-knowledge proofs.}
\end{abstract}
\section{Introduction} \label{Intro}

\textbf{Garbled circuits.} Introduced by Andrew Yao \cite{Yao82}, garbled circuits are an essential field of study in cryptography with applications to secure two-party computation \cite{Yao86,Lindell2007,LP11,LR15}, secure multi-party computation \cite{LPSY15,WRK17}, identity-based encryption \cite{DG17}, verifiable oursourced computing \cite{GGP10,MLR23}, and zero-knowledge protocols \cite{JKO13,FNO15,CGM16,KP17,GKPS18}. In its generic form, a garbled circuit (GC) protocol is a two-party protocol with a garbler party and an evaluator party, such that the former garbles a boolean circuit $f$ known to both parties and the latter evaluates it on encoded inputs to obtain encoded outputs. Classically, the garbling procedure involves assigning two $\ell$-bit keys\footnote{Throughout this work,  $\ell\in O(\lambda)$ is the key size based on the security parameter $\lambda$.} $W_i^0$ and $W_i^1$ to each wire $i$ for truth values 0 and 1 to generate the garbled circuit $\hat{F}$. The evaluator then obtains $\hat{F}$ and each key $W_i$ for the truth value $w_i$ of the circuit input wire $i$. The evaluator proceeds in topological order gate-by-gate to obtain the key $W_i$ for each circuit output wire $i$ by using $\hat{F}$. \cite{BHR12} formalized the garbled circuits as a cryptographic primitive with three security notions, privacy, obliviousness, and \emph{authenticity}. While privacy and obliviousness are concerned with keeping as secret the wires' truth values from the evaluator, authenticity prevents the evaluator from forging the encodings of ``fake'' outputs that are not obtained by honestly evaluating the garbled circuit. The efficiency of a garbled circuit is often determined by three metrics, composed of the computation overheads to the garbler and the evaluator, and the size of the garbled circuit. The latter is considered the main bottleneck \cite{ZRE15} in many applications as it directly affects the bandwidth overhead.

\textbf{Authenticity-oriented (privacy-free) garbling \footnote{We use  ``authenticity-oriented'' instead of the previous naming ``privacy-free'' to focus on what these protocols achieve instead of what they do not.}} \cite{FNO15} showed that authen-ticity-oriented garbling schemes without the privacy and the obliviousness properties can be significantly more efficient than fully-fledged schemes. Their idea is that when the evaluator knows the truth values $w_a$ and $w_b$, it can use them in the evaluation algorithm, reducing the overhead. These schemes are useful for verifiable outsourced computation \cite{GGP10,MLR23} and for efficient zero-knowledge proof protocols \cite{JKO13,FNO15,CGM16,KP17,GKPS18}. The latter application requires an additional verifiability property to ensure that a verifier cannot maliciously construct a garbled circuit whose output keys may reveal the prover's input bits, which correspond to the prover's witness.

There is a quest for achieving an optimal authenticity-oriented garbling scheme. Previous works include the authenticity-oriented schemes by \cite{FNO15}, i.e., GRR2 (compatible with FreeXOR), GRR1, and GRR1 with FleXOR. The most relevant techniques to our work are the authenticity-oriented version of half gates (HG) garbling \cite{ZRE15} and the information-theoretical (IT) garbling \cite{KP17}. If the circuit is a boolean formula, i.e., it consists of only fan-out 1 gates, IT garbling is optimal with a garbled circuit of size zero and no cryptographic operations. Although boolean formulas are already an important class of circuits with zero-knowledge proof of satisfiability applications \cite{GGPR13},  this is still quite restrictive. According to the Shannon bound \cite{Sha49,SL23}, the great majority of circuits with $n$ input bits have size $\Theta(2^n/n) $, whereas a boolean formula with  $n$ input bits can have a size $O(n)$. For fairness, we converted an example generic circuit into a boolean formula (see Appendix \ref{inefficiency_of_IT}), which resulted in a circuit with a size exponential in the size of the original circuit, implying that both garbling and evaluating have exponential complexity \footnote{The authors also acknowledge this as ``not generally efficient for circuits that are not largely formulaic'' in \cite{KP17}.}. To the best of our knowledge, before this work, HG garbling has remained the most efficient authenticity-oriented garbling scheme applicable to all boolean circuits.

\textbf{Our contributions.} We provide a novel authenticity-oriented garbling scheme that is FreeXOR compatible and has lower AND gate costs than HG garbling. Our technique benefits from the incorporation of IT garbling and a more efficient version of HG garbling that we also propose here. More concretely, here we achieve the following:
\begin{enumerate}
\item In Section \ref{gates}, along with the previously developed garbling schemes that we use in our final solution, we propose an HG-based AND gate garbling scheme that is FreeXOR compatible. This scheme allows the garbler to garble an AND gate with 0 $ct$ if one of the gate input wires is not already set by another gate that comes before it in topological order.
\item In Section \ref{author}, we combine the scheme that we mentioned in the previous step with IT garbling to obtain our final authenticity-oriented garbling scheme AuthOr. The garbling $\mathsf{Gb}$ algorithm of Author detects based on which gate garbling suits better for each gate and benefits from IT garbling and our proposal from the previous step as much as possible, while the evaluation algorithm $\mathsf{Ev}$ detects for each gate which gate garbling is used and proceeds accordingly. Theoretically, for any circuit, compared to HG garbling AuthOr achieves: (i) The same computational security guarantee based on circular correlation robustness (CCR) of the hash function, (ii) 0 to 100\% reduced garbled circuit size, and (iii) 0 to 100\% less hash function computations for both garbler and evaluator. The exact improvement depends on the circuit.
\item In Section \ref{implem}, we provide our implementation results for both our scheme and HG garbling by testing those schemes on the boolean circuits available on \cite{benchbf,benchbch}. While for the cryptographic circuits our efficiency improvement can be considered marginal, for other ones we achieve up to 98.44\% GC size reduction without blowing up the computation costs.
\item In Appendix \ref{proof}, we show the security of our scheme based on the security (i.e., satisfying authenticity of \cite{BHR12} and verifiability of \cite{JKO13}) of HG garbling which in turn is based on the circular correlation robustness of the hash scheme used \cite{ZRE15}.
\end{enumerate}
\begin{table}[t]
\centering
\begin{tabular}{ccccccccccccc} 
\hline
\vspace{-9pt}\\
 {\textbf{Technique}}& & & $\mathsf{Gb}$\textbf{ cost}& & & $\mathsf{Ev}$\textbf{ cost}& & &\textbf{GC size (bits)} & & &\textbf{Assump.} \\
\vspace{-9pt}\\
\hline
\vspace{-8pt}\\
 GRR2+FreeXOR & & &$3g_A s$ & & & $g_A s$ & & &$2g_A \ell $  & & &CCR\\
 \vspace{-8pt}\\
  GRR1 & & &$3g_A s$ & & & $g_A s$  & & &$ g\ell$ & & &PRF \\
  \vspace{-8pt}\\
   GRR1+FleXOR & & &$3g_A s$ & & & $g_A s$ & & &$ [g_A\ell,(g_A+2g_X)\ell]$ & & &CCR\\
   \vspace{-8pt}\\
        Fast garbling & & &$(4g_A+3g_X) s$&  & & $(2g_A+1.5g_X) s$ & & &$ (2g_A+g_X)\ell$& & &PRF\\
     \vspace{-8pt}\\
   Three halves & & &$[3g_A s,6g_A s]$&  & & $[1.5g_A s,3g_A s]$ & & &$ 1.5g_A\ell+5g$& & &CCR\\
     \vspace{-8pt}\\
    IT & & &$[\Omega(g),exp(g)]$& & & $[\Omega(g),exp(g)]$& & &0& & &None\\
    \vspace{-8pt}\\
     HG & & &$2g_A s$& & & $g_A s$ & & &$ g_A\ell$& & &CCR \\
     \vspace{-8pt}\\
    \textbf{AuthOr}& & &$\bm{[0,2g_A s]}$&  & & $\bm{[0,g_A s]}$ & & &$ \bm{[0,g_A\ell]}$& & &\textbf{CCR}\\
    \vspace{-8pt}\\
\hline
\end{tabular}
\caption{ Comparison of state-of-the-art garbling schemes with authenticity, evaluating garbling ($\mathsf{Gb}$), evaluation ($\mathsf{Ev}$), GC size, and security assumptions (CCR=circular correlation robustness and PRF=pseudo-random function). Circuit costs are parameterized by the total number $g$ of gates, the total number $g_A$ of AND gates, the total number $g_X$ of XOR gates, symmetric-key operation cost $s$, and ciphertext size $\ell \in O(\lambda)$. Ranges reflect variations across circuit structures for $f$. All schemes include additional $\Theta(g)$ computation overheads from basic operations.
}
\label{comparison}
\end{table}
\textbf{Comparison with the state-of-the-art.} Table \ref{comparison} compares our scheme with the mentioned authenticity-oriented garbling schemes, the symmetric-key based full-fledged fast garbling of \cite{GLNP15}, and three halves of \cite{RR21}. Circular correlation robustness (CCR) \cite{CKKZ12}  is a standard assumption in FreeXOR and FleXOR compatible schemes, while the security of the non-compatible ones can be directly based on its pseudo-randomness.

\textbf{Comparison to other proposals.} We would like to mention some other approaches than symmetric-key-based ones for fairness. Gate-ID-based garbling \cite{KKKS15} scheme uses trapdoor permutations (i.e., public-key operations) to garble a circuit, resulting in a garbled circuit with 4 bits per gate and an additional 2 $ct$s for AND gates that share input wires with other gates. While in terms of the garbled circuit size $[4g,4g+(g-1)\cdot 2ct]$, \cite{KKKS15} improves efficiency compared to previous schemes, it relies on public-key primitives. Other approaches include succinct garbling schemes \cite{CHJV15,AL18} and reusable garbling schemes \cite{GKP+13,BGG+14,Agr17}. However, they induce heavy computational costs and are based on strong assumptions, e.g., the existence of indistinguishability obfuscation / functional encryption / fully homomorphic encryption, the learning with errors (LWE) assumption, or assumptions on multilinear maps. While one recent work \cite{LWYY24} based on the LWE assumption has a very low garbled size ($g$ bits), its concrete computation costs are heavy with a lot of homomorphic operations per gate.

\section{Background on Circuits and Garbling}
\textbf{Boolean circuits and notation.} A boolean circuit $f$ is a directed acyclic graph with gates, circuit inputs, and circuit outputs as vertices, and wires as edges. Throughout this work, we use the following notation. $i.\mathsf{GateInputs}$ denotes the input wire(s) of the gate $i$. $f.\mathsf{XORGates}$ denotes the XOR gates in the circuit $f$. $i.\mathsf{Type}$ denotes the type variable associated with the wire $i$. Each gate $i$ is identified by the index of their output wire $i$.

\textbf{Garbled circuits and notation.} Throughout this work,
\begin{itemize}
\item $a\twoheadleftarrow B$ denotes that $a$ is picked uniformly at random from the set $B$,
\item $a\gets B$ denotes that $a$ is set as an execution of the algorithm $B$,
\item $\ell\in O(\lambda)$ is the key length chosen based on the security parameter and symmetric-key primitive,
\item $\lambda$ denotes the security parameter. $f.\mathsf{Inputs}$ denotes the input wires of the circuit $f$,
\item $f.\mathsf{Outputs}$ denotes the output wires of the circuit $f$,
\item $\hat{a}$ denotes a straightforward collection of the values produced as $a_i$,
\item $\hat{F}.\mathsf{Next}$ denotes the iterator over the garbled circuit $\hat{F}$.
\end{itemize}

This work follows the garbling schemes abstraction introduced by \cite{BHR12}. A garbling scheme is composed of the following algorithms:
\begin{itemize}
\item $\mathsf{Gb}$: Takes as input $1^\lambda$ and a boolean circuit $f$, and outputs $\hat{F}$, $\hat{ e}$, and  $\hat{d}$, which denote the garbled circuit, encoding information, and decoding information, respectively.
\item $\mathsf{Ev}$: Takes as input $f$, $\hat{F}$, and $\hat{X}$, and outputs the garbled output $\hat{Y}$.
\item $\mathsf{En}$: Takes as input $\hat{e}$ and the plaintext input bit string $\hat{x}$ to $f$, and outputs garbled input $\hat{X}$.
\item $\mathsf{De}$: Takes as input $\hat{d}$ and $\hat{Y}$, and outputs the plaintext output bit string $\hat{y}$.
\item $\mathsf{Ve}$: Takes as input $\hat{e}$, $\hat{F}$, and $f$ and outputs a single bit $b$ (1 for verified).
\end{itemize}
\begin{definition}[Correctness] A garbling scheme is correct if for every circuit $f:\{0,1\}^n\rightarrow\{0,1\}^m$ and its input string $\hat{x}\in \{0,1\}^n$, $f(x)=\mathsf{De}\Big(\hat{d},\mathsf{Ev}\big(f,\hat{F},\mathsf{En}(\hat{e},\hat{x})\big)\Big)$ holds
for each execution $(\hat{F},\hat{e},\hat{d})\gets\mathsf{Gb}(1^\lambda, f)$.
\end{definition}

As we are interested in authenticity-oriented garbling schemes, we provide the authenticity definition of \cite{BHR12} below. The definition briefly ensures no PPT adversary can come up with incorrect output wire keys that are decodable by $\mathsf{De}$.
\begin{definition}[Authenticity] A garbling scheme is authentic if for every PPT adversary $\mathcal{A}$, every circuit $f:\{0,1\}^n\rightarrow\{0,1\}^m$ and its input string $\hat{x}\in \{0,1\}^n$, there exist a negligible function $\mathsf{n}$ such that:
$$Pr[(\hat{Y}\neq \mathsf{Ev}(f,\hat{F},\hat{X})) \wedge (\mathsf{De}(\hat{d},\hat{Y})\neq \bot ) : (\hat{F},\hat{e},\hat{d})\gets\mathsf{Gb}(1^\lambda, f), \hat{X}\gets\mathsf{En}(\hat{e},\hat{x}), \quad\quad\quad\quad\quad$$
$$ \quad\quad\quad\quad\quad\quad\quad\quad\quad\quad\quad\quad\quad\quad\quad\quad\quad\quad\quad\quad\quad\quad\quad  \hat{Y} \gets \mathcal{A}(\hat{F},\hat{X})] \leq\mathsf{n}(\lambda).$$
\end{definition}

Additionally, we require the authenticity-oriented garbling scheme to have the verifiability of \cite{JKO13}, so that the obtained garbling scheme can easily be plugged into their given zero-knowledge proof (ZKP) of knowledge protocol. In this ZKP scheme, the verifier plays the garbler role during the garbled circuit subprotocol, yet security against a malicious verifier/garbler is still achieved if the verifiability of the garbled circuit is ensured. We note that the definition below is a slightly modified version of the one given in  \cite{JKO13} for arbitrary output lengths by giving $y$ as input to  $\mathsf{Ext}$.
\begin{definition}[Verifiability]A garbling scheme is verifiable if for every PPT adversary $\mathcal{A}$, every circuit $f:\{0,1\}^n\rightarrow\{0,1\}^m$, there exist an expected polynomial time algorithm $\mathsf{Ext}$ and a negligible function $\mathsf{n}$ such that:
\begin{enumerate}
\item $Pr[\mathsf{Ev}(f,\hat{F},\hat{X}_0)\neq\mathsf{Ev}(f,\hat{F},\hat{X}_1) \wedge \mathsf{Ve}(\hat{e},\hat{F}, f)=1: (\hat{F},\hat{e})\gets\mathcal{A}(1^\lambda, f), \hat{X}_0\gets\mathsf{En}(\hat{e},\hat{x}_0),\hat{X}_1\gets\mathsf{En}(\hat{e},\hat{x}_1)] \leq \mathsf{n}(\lambda)$ for input strings $\hat{x}_0,\hat{x}_1\in \{0,1\}^n$ s.t. $f(\hat{x}_0)=f(\hat{x}_1)$, and
\item $Pr[\mathsf{Ext}(\hat{F},\hat{e},y)\neq\mathsf{Ev}(f,\hat{F},\hat{X}) \wedge \mathsf{Ve}(\hat{e},\hat{F}, f)=1: (\hat{F},\hat{e})\gets\mathcal{A}(1^\lambda, f), \hat{X}\gets\mathsf{En}(\hat{e},\hat{x})] \leq \mathsf{n}(\lambda)$ for all input strings $\hat{x}\in \{0,1\}^n$ s.t. $f(\hat{x})=\hat{y}$.
\end{enumerate}
\end{definition}

\section{Gate Garbling Algorithms and Our Optimization on Half Gates} \label{gates}
In this section, we elaborate on the algorithms that we use in AuthOr, i.e. authenticity-oriented half gates-based $\mathsf{HG2}$, $\mathsf{HG1}$, $\mathsf{HG0}$, and information-theoretic $\mathsf{ITAND}$ for garbling AND gates;  FreeXOR based $\mathsf{FreeXOR2}$, $\mathsf{FreeXOR1}$, $\mathsf{FreeXOR0}$, and information-theoretic $\mathsf{ITXOR}$ for garbling XOR gates; $\mathsf{FwNOT}$ and $\mathsf{BwNOT}$ for garbling NOT gates. Among the given gate garbling techniques, $\mathsf{HG1}$ and $\mathsf{HG0}$ are our contribution. We highlight that all the given algorithms garble gates in the forward direction in topological order, except for IT garbling algorithms, which garble in the backward direction from outputs to inputs, but we postpone how to determine which algorithm will be used for a given gate to Section \ref{author}. Tables \ref{gate_garbling},   \ref{gate_evaluation}, and \ref{gate_verif} show the formal garbling, evaluation, and verification algorithms, respectively.

\textbf{FreeXOR ($\mathsf{FreeXORz}$) \cite{KS08}.}
The FreeXOR technique \cite{KS08} requires that the keys for each wire $i$ have the same offset $\Delta$, so  $W_i^1=W_i^0\oplus \Delta$ always holds. Given the input wire keys $W_a^0$ and $W_b^0$, FreeXOR sets the output wire key $W_i^0\gets W_a^0\oplus W_b^0$.  Depending on the number $z$ of inputs that have been set before garbling the XOR gate, this scheme is called as $\mathsf{FreeXORz}$ for $z\in \{0,1,2\}$ and they differ by picking the unset wire key uniformly at random. The corresponding garbling algorithms are named as $\mathsf{GbFreeXOR2}$, $\mathsf{GbFreeXOR1}$, and $\mathsf{GbFreeXOR0}$; and the evaluation algorithm for all XOR gates is $\mathsf{EvXOR}$.

\begin{table}[t]
\centering
\scriptsize
\begin{tabular}{|l|l|l|} 
\hline
\textbf{procedure} $\mathsf{GbHG2}(i,W_a^0,W_b^0,\Delta)$:                                                & \textbf{procedure} $\mathsf{GbHG1}(i,W_a^0,\Delta)$:                                          & \textbf{procedure} $\mathsf{GbHG0}(i,\Delta)$:  \\
\quad $ W_i^0 \gets H(i,W_a^0)$                                                                                              & \quad $ W_i^0 \gets H(i,W_a^0)$                                                    		        &  \quad $W_a^0\twoheadleftarrow \{0,1\}^\ell$ \\
\quad $ F_i\gets H(i,W_a^0)\oplus $                                							& \quad $W_b^0\gets H(i,W_a^0)\oplus $                           					& \quad $ W_i^0 \gets H(i,W_a^0)$  \\
 \quad\quad  \quad $ H(i,W_a^0\oplus\Delta)\oplus W_b^0      $                                                  &     \quad\quad  \quad            $H(i,W_a^0\oplus\Delta)$                                       & \quad $W_b^0\gets H(i,W_a^0)\oplus H(i,W_a^0\oplus\Delta)$  \\
\quad \textbf{return} $ (F_i,W_i^0)$                                                                                             &    \quad \textbf{return} $ (W_b^0,W_i^0)$                                                                &  \quad \textbf{return} $ (W_a^0,W_b^0,W_i^0)$  \\
                                                                                                                                                
                                                                                                                                                   &                                                                                                                                &\\
 \textbf{procedure} $\mathsf{GbFreeXOR1}(W_a^0)$:                                           		    & \textbf{procedure} $\mathsf{GbFreeXOR0}()$:                                                        & \textbf{procedure} $\mathsf{GbFreeXOR2}(W_a^0,W_b^0)$:  \\
\quad $W_b^0\twoheadleftarrow \{0,1\}^\ell$                                                  			    &  \quad $W_a^0\twoheadleftarrow \{0,1\}^\ell$                                                         & \quad $ W_i^0 \gets W_a^0 \oplus W_b^0$   \\
\quad $ W_i^0 \gets W_a^0 \oplus W_b^0$                                                           		    & \quad $W_b^0\twoheadleftarrow \{0,1\}^\ell$                                                         & \quad \textbf{return} $ W_i^0$ \\
 \quad \textbf{return} $ (W_b^0,W_i^0)$                        			                                      & \quad $ W_i^0 \gets W_a^0 \oplus W_b^0$                                                        &\\
		                                                                                                                                  &  \quad \textbf{return} $ (W_a^0,W_b^0,W_i^0)$                                                       &\\
                                                                                                                                                   &                                                                                                                                &\\ 
\textbf{procedure} $\mathsf{GbITAND}(W_i^0,W_i^1)$:                                                           & \textbf{procedure} $\mathsf{GbITXOR}(W_i^0,W_i^1)$:                                       & \textbf{procedure} $\mathsf{GbBwNOT}(W_i^0,W_i^1)$:  \\
\quad $ W_a^0, W_b^0 \gets W_i^0 , W_i^0$                                                                         & \quad $W_a^1\twoheadleftarrow \{0,1\}^\ell$                                                  	&  \quad $ (W_a^0,W_a^1) \gets (W_i^1,W_i^0)$ 		 \\
\quad $ W_a^1\twoheadleftarrow \{0,1\}^\ell$                                                                            & \quad $W_b^1\gets W_a^1 \oplus W_i^0$                                                  		& \quad \textbf{return} $ (W_a^0,W_a^1)$  \\
\quad $ W_b^1\gets W_a^1 \oplus W_i^1$                                                                                & \quad $W_a^0\gets W_b^1 \oplus W_i^1$ 			                                          &  \\
\quad \textbf{return} $ (W_a^0,W_a^1, W_b^0,W_b^1)$ 					                     & \quad $ W_b^0 \gets W_a^1 \oplus W_i^1$                                                          & \textbf{procedure} $\mathsf{GbFwNOT}(W_a^0,\Delta)$: \\
                                                                                                                                                   & \quad \textbf{return} $(W_a^0,W_a^1, W_b^0,W_b^1)$                                        & \quad $ W_i^0 \gets W_a^0 \oplus\Delta$   \\
                                                                                                                                                   &                                                                                                                                & \quad \textbf{return} $ W_i^0$   \\
                                                                                                                                                                                                                                                                                                                                                                                                                                                                                                                       
\hline
\end{tabular}
\caption{Gate garbling algorithms that are used in AuthOr.}
 \label{gate_garbling}
\end{table}

\textbf{Authenticity-Oriented Half Gates Garbling ($\mathsf{HGz}$).} Here we provide the $\mathsf{HG2}$ of \cite{ZRE15} and our contributions $\mathsf{HG1}$ and $\mathsf{HG0}$ obtained from it.

\textbf{$\mathsf{HG2}$ \cite{ZRE15}.} Authenticity-Oriented version of the half-gates protocol \cite{ZRE15} is compatible with FreeXOR, i.e., the keys for each wire $w_i$ have the same offset $\Delta$. Here, we provide a slightly modified version, such that the hash call $H$ takes as input the id $i$ of the gate instead of the one in \cite{ZRE15} \footnote{In \cite{ZRE15}, the hash function takes as input a value incremented with each AND gate garbled. This modification does not affect the security at all, as their proof is based on the uniqueness of this value.}. Given the input wire keys $W_a^0$, $W_b^0$, and the offset $\Delta$, the garbling algrorithm $\mathsf{GbHG2}$ for the gate with index $i$ results in 1 $ct$ which is set as $ F_i\gets H(i,W_a^0)\oplus H(i,W_a^1)\oplus W_b^0$ where $H:\{0,1\}^*\rightarrow \{0,1\}^\ell$ is an hash function with circular correlation robustness \cite{CKKZ12}. The output wire key for 0 is set as $W_i^0=H(i,W_a^0)$. As this scheme requires that input keys for both wires should have been already fixed before the garbling, we call this gate garbling as $\mathsf{HG2}$. We note that among all gate garbling schemes that we use, this one is the only one that results in a ciphertext that needs to be sent to the evaluator. If the truth value $w_a=0$, the evaluation algorithm $\mathsf{EvHG2}$, sets $W_i\gets H(i,W_a)$. Otherwise, $W_i\gets F_i\oplus H(i,W_a)\oplus W_b$. 

\textbf{Our contribution: $\mathsf{HG1}$ and $\mathsf{HG0}$.} If the keys of only one input wire was not fixed before the garbling of a gate, then the garbler can choose such that $F_i=0^\ell$ and it does not need to be sent anymore. More concretely, given the input wire key $W_a^0$ and the offset $\Delta$, the  key of the output wire $i$ is set as $W_i^0=H(i,W_a^0)$ by $\mathsf{GbHG1}$ , the key for the $w_b$ is set as $W_b^0\gets H(i,W_a^0)\oplus H(i,W_a^1)$. If none of the wires were set before the gate being garbled, one wire key is picked randomly as $W_a^0\twoheadleftarrow \{0,1\}^\ell$ and the algorithm $\mathsf{GbHG0}$ continues in the same manner as $\mathsf{HG1}$. Given the input keys $W_a$ and $W_b$, if $w_a=0$, the evaluation algorithm $\mathsf{EvHG0\&1}$ for both $\mathsf{HG1}$ and $\mathsf{HG0}$ sets $W_i\gets H(i,W_a)$. Otherwise, $W_i\gets H(i,W_a)\oplus W_b$.

\begin{table}[t]
\centering
\scriptsize
\begin{tabular}{|l|l|} 
\hline
                                                                                                                                                                                                                                                                                                                                                                                                                                 
\textbf{procedure} $\mathsf{EvHG2}(i,F_i,W_a,W_b,w_a,w_b)$:                                                & \textbf{procedure} $\mathsf{EvHG0\&1}(i,W_a,W_b,w_a,w_b)$:                         \\ 
\quad \textbf{if} $w_a=0$ \textbf{then}                                                                                       & \quad \textbf{if} $w_a=0$ \textbf{then}                                                                   \\
\quad \quad $W_i\gets H(i,W_a)$                                                                                              & \quad \quad $W_i\gets H(i,W_a)$                                                                          \\
\quad \textbf{else}                                                                                                                      & \quad \textbf{else}                                                                                                    \\
\quad \quad $W_i\gets H(i,W_a)\oplus W_b \oplus F_i$                                                              & \quad \quad $W_i\gets H(i,W_a)\oplus W_b$                                                         \\
\quad \textbf{return} $ W_i$                                                                                                       & \quad \textbf{return} $ W_i$                                                                                   \\                                                                                                                                                   
                                                                                                                                                   &                                                                                                                                \\

  \textbf{procedure} $\mathsf{EvXOR}(W_a,W_b,w_a,w_b)$:                                                                                            &\textbf{procedure} $\mathsf{EvITAND}(W_a,W_b,w_a,w_b)$:    \\
\quad $ W_i \gets W_a \oplus W_b$                                                                                                                    	& \quad \textbf{if} $w_a=0$  \textbf{then}   \\
\quad \textbf{return} $ W_i$                                                                                                                       		& \quad \quad$ W_i\gets W_a$ \\
                                                                             		                                          & \quad \textbf{else if} $w_b=0$ \textbf{then} \\
	\textbf{procedure} $\mathsf{EvNOT}(W_a)$:				                                                                    & \quad \quad$ W_i\gets W_b$ \\
            \quad $W_i\gets W_a$                                                                                                                                              & \quad \textbf{else} \\
                  \quad          \textbf{return} $ W_i$                                                                                                                                                                                                                               & \quad \quad$ W_i\gets W_a\oplus W_b$\\
                                                                                                                                                                                                                                                                            &  \quad \textbf{return} $ W_i$ \\                                                                                                                                                                                                                                                                                                                                                                                                          
\hline
\end{tabular}
\caption{Gate evaluation algorithms that are used in AuthOr.}
 \label{gate_evaluation}
\end{table}

\begin{table}[t]
\centering
\scriptsize
\begin{tabular}{|l|l|} 
\hline
\textbf{procedure} $\mathsf{VeHG2}(i,W_a^0,W_a^1,W_b^0,W_b^1,F_i)$:			&    \textbf{procedure} $\mathsf{VeHG0\&1}(i,W_a^0,W_a^1,W_b^0,W_b^1)$:		 \\
\quad \textbf{if} $W_a^0\oplus W_a^1\neq W_b^0\oplus W_b^1$					&\quad \textbf{if} $W_a^0\oplus W_a^1\neq W_b^0\oplus W_b^1$ 				\\
 \quad \quad \textbf{return} $0$												& \quad \quad \textbf{return} $0$										 \\
 \quad $\Delta\gets W_a^0\oplus W_a^1$										& \quad $\Delta\gets W_a^0\oplus W_a^1$								\\
  \quad $K_{1}\gets H(i,W_a^0)$                                                               				&  \quad $K_{1}\gets H(i,W_a^0)$  									 \\        
   \quad $K_{2}\gets W_b^0\oplus H(i,W_a^0\oplus\Delta) \oplus F_i$                                     &    \quad $K_{2}\gets W_b^0\oplus H(i,W_a^0\oplus\Delta) $    				\\ 
      \quad \textbf{if} $K_{1}\neq K_{2}$                                                                 			& \quad \textbf{if} $K_{1}\neq K_{2}$     									 \\   
       \quad \quad \textbf{return} $0$ 					                                                   & \quad \quad \textbf{return} $0$ 		 			\\      
            \quad  \textbf{return} $(K_{1},K_{1}\oplus\Delta)$                                                       &      \quad  \textbf{return} $(K_{1},K_{1}\oplus\Delta)$        													\\       
                   																		    &                                                                                                                                 \\ 
\textbf{procedure} $\mathsf{VeXOR}(W_a^0,W_a^1,W_b^0,W_b^1)$:		    &  \textbf{procedure} $\mathsf{VeNOT}(W_a^0,W_a^1)$:  \\ 
\quad \textbf{if} $W_a^0\oplus W_a^1\neq W_b^0\oplus W_b^1$  									                                  & \quad \textbf{return} $(W_a^1,W_a^0)$  \\         
  \quad \quad \textbf{return} $0$    																	&                       \\        
\quad  \textbf{return} $(W_a^0\oplus W_b^0,W_a^0\oplus W_b^1) $  																           &           \textbf{procedure} $\mathsf{VeITAND}(W_a^0,W_a^1,W_b^0,W_b^1):	$             		\\   
						      											&                      \quad  \textbf{if} $W_a^0\neq W_b^0$                \\   
						      											&                       \quad \quad \textbf{return} $0$             \\ 
						      											&                       \quad  \textbf{return} $(W_a^0,W_a^1\oplus W_b^1)$           \\ 																																		                                                                                                                                                                                                                                                                                                                                                                                                                                                                                                                              
\hline
\end{tabular}
\caption{Gate verification algorithms that are used in AuthOr.}
 \label{gate_verif}
\end{table}

\textbf{Information-Theoretic (IT) Garbling \cite{KP17}.} \cite{KP17} provides an elegant way of authenticity-oriented garbling AND and XOR gates that does not require costly hash function calls and results in size-zero garbled circuits. However, if during the evaluation of an AND gate $w_a=0$ and $w_b=1$, their scheme leaks the keys both $W_b^0$ and $W_b ^1$ as it sets $W_a^0=W_b^0$ during garbling. Yet, the authors prove in their work, this leakage is not a problem (as the output wire key is never leaked) as long as there is no fan-out-2 gates in the circuit.  Due to this leakage, their scheme does not have a global offset $\Delta$. Nevertheless, the authors still show that XOR gates can be garbled with a local offset ``freely''. The garbling is ``backward''' that is visiting gates in the inverse topological order, so that when a gate is being garbled, either its output keys are already set or the garbler picks them randomly as $W_i^0\twoheadleftarrow \{0,1\}^\ell$ and $W_i^1\twoheadleftarrow \{0,1\}^\ell$). 

\textbf{$\mathsf{ITAND}$.} Given the output keys $W_i^0$ and $W_i^1$, the garbling $\mathsf{GbITAND}$ sets the keys for the truth value zero for both input wires as $W_a^0\gets W_i^0$ and $W_b^0\gets W_i^0$. Next, it picks $W_a^1\twoheadleftarrow \{0,1\}^\ell$ and computes $W_b^1\gets W_a^1\oplus W_i^1$. During the evaluation, $\mathsf{EvITAND}$ sets $W_i\gets W_a$ if $w_a=0$, $W_i\gets W_b$ if $w_a=1$ and $w_b=0$, and $W_i\gets W_a\oplus W_b$ otherwise. 

\textbf{$\mathsf{ITXOR}$.} Given the output keys $W_i^0$ and $W_i^1$,  the garbling $\mathsf{GbITXOR}$  sets the local offset for the XOR gate being garbled as follows. It picks $W_a^1\twoheadleftarrow \{0,1\}^\ell$ and computes $W_b^1\gets W_a^1\oplus W_i^0$, $W_a^0\gets W_b^1\oplus W_i^1$, and $W_b^0\gets W_a^1\oplus W_i^1$. During the evaluation,  $\mathsf{EvXOR}$ sets $W_i\gets W_a\oplus W_b$ using.

\textbf{Garbling NOT Gates \cite{Kol05,KP17}.}  In both forward and backward directions, NOT gates can be freely garbled by just swapping the association of keys for 0 and 1 as in \cite{Kol05,KP17}. That is, given a NOT gate with input wire $a$ and output wire $i$, $W_a^0=W_i^1$ and $W_a^1=W_i^0$ based on in which direction this is done, we name the algorithms as $\mathsf{GbFwNOT}$ for forward and  $\mathsf{GbBwNOT}$ for backward. The evaluation algorithm $\mathsf{EvNOT}$ just copies the key for the input to the output in the forward direction, i.e., the key does not change while the truth value changes.

\textbf{Verification Algorithms.} Figure \ref{gate_garbling} also shows the verification algorithms for the provided gate garbling algorithms. Given both keys for each input wire, they check whether a gate is garbled correctly. If so, they return 1 and the keys for the output wire; otherwise, they return 0. We have obtained $\mathsf{VeITAND}$ from \cite{KP17} as the verification algorithm for the IT garbled AND gates, which checks that the output key for 0 that could be found in all cases would be equal.  The verification algorithm $\mathsf{VeXOR}$ for XOR gates executes based on whether both input wires have the same offset. We constructed the half gates AND verification algorithms $\mathsf{VeHG2}$ and $\mathsf{VeHG20\&1}$ based on the check that the output key for 0 that could be found in all cases would be equal or not similar to $\mathsf{VeITAND}$. We highlight that $\mathsf{VeXOR}$, $\mathsf{VeHG2}$, and $\mathsf{VeHG20\&1}$ also check whether both input wires have the same offset. 

\section{Our Authenticity-Oriented Garbling Protocol} \label{author}

We provide our AuthOr scheme in Table \ref{our_garbling} that makes use of the gate garbling algorithms in Section \ref{gates}. We assume both input wires for each AND or XOR gate are distinct, as otherwise, our scheme is not immune to the mentioned attack in \cite{NS23} \footnote{If this is not the case, we refer to their work for the possible countermeasures.}.

\begin{table}
\centering
\scriptsize
\begin{tabular}{|l|l|} 
\hline
\textbf{procedure} $\mathsf{Gb}(1^\lambda , f)$:                                                                                                                                              & \textbf{procedure} $\mathsf{Ev}(f,\hat{F},\hat{X})$:  \\
\quad $\Delta \twoheadleftarrow \{0,1\}^\ell$  s.t. $\ell\in O(\lambda)$                                                                                                               	& \quad label each wire as $\mathsf{TypeM}$ or $\mathsf{TypeS}$         \\
\quad label each wire as $\mathsf{TypeM}$ or $\mathsf{TypeS}$                                                                                                                                                                 	& \quad  \textbf{for} each gate $i \in f \text{ }\{in\text{ } topo.\text{ } order\}$ \textbf{do}  \\
\quad $\{Forward$ $phase\}$                                                                                                                                      	&  \quad \quad   \textbf{if} $i \in f.\mathsf{NOTGates}$ \textbf{then}\\
\quad  \textbf{for} each gate $i \in f \text{ }\{in\text{ } topo.\text{ } order\}$ \textbf{do}                                                                           	&  \quad \quad \quad $a \gets i.\mathsf{GateInputs}$   \\
\quad \quad     \textbf{if} $i \in f.\mathsf{NOTGates}$ \textbf{then}													& \quad \quad \quad $W_i \gets \mathsf{EvNOT}(W_a)$   \\
\quad \quad \quad $a \gets i.\mathsf{GateInputs}$																&   \\
 \quad \quad \quad \textbf{if}  $a.\mathsf{Type}=\mathsf{TypeM}$  \textbf{then}											& \quad \quad \textbf{else if} $i \in f.\mathsf{XORGates}$ \textbf{then}  \\
  \quad \quad \quad \quad $W_a^0 \twoheadleftarrow \{0,1\}^\ell$													&  \quad \quad \quad $\{a, b\} \gets i.\mathsf{GateInputs}$    \\
  \quad \quad \quad \textbf{if }   $a.\mathsf{Type}\neq\mathsf{TypeS}$ \textbf{then} 										&  \quad \quad \quad $W_i \gets \mathsf{EvXOR}(W_a, W_b)$    \\
 \quad \quad \quad \quad $W_i^0 \gets \mathsf{GbFwNOT}(W_a^0)$												& \\
																									&  \quad \quad   \textbf{else} \\
\quad \quad     \textbf{else if} $i \in f.\mathsf{XORGates}$ \textbf{then} 												&  \quad \quad \quad $\{a, b\} \gets i.\mathsf{GateInputs}$    \\
\quad \quad \quad     $\{a, b\} \gets i.\mathsf{GateInputs}$ 														& \quad \quad \quad \textbf{if}  $a.\mathsf{Type}=\mathsf{TypeF}$ \textbf{and}  $b.\mathsf{Type}=\mathsf{TypeF}$ \textbf{then} \\
 \quad \quad \quad \textbf{if}  $a.\mathsf{Type}=\mathsf{TypeF}$ \textbf{and}  $b.\mathsf{Type}=\mathsf{TypeF}$ \textbf{then}     	&  \quad \quad \quad \quad $W_i\gets \mathsf{EvHG2}(i,\hat{F}.\mathsf{Next},W_a,W_b,w_a,w_b) $     \\
\quad \quad \quad \quad $W_i^0 \gets \mathsf{GbFreeXOR2}(W_a^0,W_b^0)$ 										& \quad \quad \quad \textbf{else if} $\exists p \in \{a,b \} \text{ } s.t. \text{ }    p.\mathsf{Type}=\mathsf{TypeF}$  \textbf{then} \\
																									&  \quad \quad \quad \quad $W_i \gets \mathsf{EvHG0\&1}(i,W_p,W_{\{a,b\}\setminus p},w_p,w_{\{a,b\}\setminus p})$   \\
\quad \quad \quad \textbf{else if}  $\exists p \in \{a,b \} \text{ } s.t. \text{ } p.\mathsf{Type}=\mathsf{TypeF} $ \textbf{then} 				 & \quad \quad \quad \textbf{else if} $\exists p \in \{a,b \} \text{ } s.t. \text{ } p.\mathsf{Type}=\mathsf{TypeM} $  \textbf{then}        \\
\quad \quad \quad \quad $(W_{\{a,b\}\setminus p} ^0, W_i^0) \gets \mathsf{GbFreeXOR1}(W_p^0)$						  	&  \quad \quad \quad \quad $W_i \gets \mathsf{EvHG0\&1}(i,W_a,W_b,w_a,w_b)$   \\
																							                  &  \quad \quad \quad \textbf{else}                   \\
\quad \quad \quad \textbf{else if} $ \exists p \in \{a,b \}  \text{ } s.t. \text{ }  p.\mathsf{Type}=\mathsf{TypeM}$ \textbf{then}   				& \quad \quad \quad \quad $W_i \gets \mathsf{EvITAND}(W_a,W_b,w_a,w_b)$   \\
\quad \quad \quad \quad $(W_a^0,W_b^0,W_i^0) \gets \mathsf{GbFreeXOR0}()$                                                                                 &    \\
																									&  \quad \quad \textbf{if} gate $i $ has $\mathsf{TypeM}$ or $\mathsf{TypeF}$ input wire  \textbf{then}\\
\quad \quad \textbf{else}    																				&   \quad \quad \quad \textbf{for} each $p\in\{a,b,i\}$ s.t. $ğ.\mathsf{Type}\neq\mathsf{TypeF}$ \textbf{do}\\
\quad \quad \quad     $\{a, b\} \gets i.\mathsf{GateInputs}$															&   \quad \quad \quad \quad  label   $p$ as $\mathsf{TypeF}$	\\
\quad \quad \quad \textbf{if}  $a.\mathsf{Type}=\mathsf{TypeF}$ \textbf{and} $b.\mathsf{Type}=\mathsf{TypeF}$  \textbf{then}         &         \\ 
\quad \quad \quad \quad $(\hat{F}.\mathsf{Next},W_i^0) \gets \mathsf{GbHG2}(i,W_a^0,W_b^0,\Delta)$                                     		&     \quad \textbf{for} $i \in \hat{f}.\mathsf{Outputs}$ \textbf{do}    \\    
																									&    \quad\quad $Y_i \gets W_i$ \\ 
\quad \quad \quad \textbf{else if} $ \exists p \in \{a,b \} \text{ } s.t. \text{ }  p.\mathsf{Type}=\mathsf{TypeF}$  \textbf{then}                                 &   \quad     \textbf{return} $\hat{Y}$       \\
\quad \quad \quad \quad $(W_{\{a,b\}\setminus p} ^0, W_i^0) \gets \mathsf{GbHG1}(i,W_p^0,\Delta)$     		                                    &  \\
																					                                  &  \textbf{procedure} $\mathsf{En}(\hat{e},\hat{x})$: \\
\quad \quad \quad \textbf{else if} $ \exists p \in \{a,b \}  \text{ } s.t. \text{ } p.\mathsf{Type}=\mathsf{TypeM} $  \textbf{then}  				     &     \quad parse $(e_i^0,e_i^1)\gets e_i$  \\                    
\quad \quad \quad \quad $(W_a^0, W_b^0, W_i^0) \gets \mathsf{GbHG0}(i,\Delta)$       									  & \quad $X_i \gets e_i^{x_i} $ \\
																									&    \quad \textbf{return} $\hat{X}$\\
 \quad \quad \textbf{if} gate $i $ has $\mathsf{TypeM}$ or $\mathsf{TypeF}$ input wire \textbf{then}							&       \\
  \quad \quad \quad \textbf{for} each $p\in\{a,b,i\}$ s.t. $p.\mathsf{Type}\neq\mathsf{TypeF}$ \textbf{do}							&   \textbf{procedure} $\mathsf{De}(\hat{d},\hat{Y})$:  \\ 
 \quad \quad \quad \quad $W_p^1\gets W_p^0\oplus \Delta$ 														& \quad  \textbf{for} $d_i\in \hat{d}$ \textbf{do}\\ 
\quad \quad \quad \quad  label   $p$ as $\mathsf{TypeF}$								   								&  \quad \quad parse $(h_0, h_1) \gets d_i$\\ 
 																									&    \quad \quad parse $(h_0, h_1) \gets d_i$ \\      
 \quad $\{Backward$ $phase\}$																			    & \quad \quad \textbf{if}  $H(Y_i,i)=h_0$ \textbf{then} $y_i\gets 0$  \\
 \quad  \textbf{for} each gate $i \in f \text{ } \{in\text{ } inv.\text{ }  topo.\text{ } order\}$ \textbf{do}                                                             &    \quad \quad \textbf{else if}  $H(Y_i,i)=h_1$ \textbf{then} $y_i\gets 1$  \\ 
\quad \quad   \textbf{if} $i.\mathsf{Type}=\mathsf{TypeS}$     \textbf{then}												 &    \quad \quad \textbf{else}  \textbf{return} $\bot$ \\
\quad \quad \quad   $(W_i^0,W_i^1)  \twoheadleftarrow \{0,1\}^\ell\times \{0,1\}^\ell$                                                                                 &  \quad  \textbf{return} $\hat{y}$  \\
																									&  \\
 \quad  \quad \textbf{if} $i \in f.\mathsf{NOTGates}$ \textbf{then}													    &  \textbf{procedure} $\mathsf{Ve}(\hat{e},\hat{F},f)$:   \\  
  \quad \quad  \quad $a \gets i.\mathsf{GateInputs}$     		                 											     &  \quad find which garbling is used in each gate (as in $\mathsf{Ev}$)\\                               
  \quad \quad \quad \textbf{if} $a.\mathsf{Type}=\mathsf{TypeS}$	                                                                                 			  &   \quad  \textbf{for} each gate $i \in f \text{ }\{in\text{ } topo.\text{ } order\}$ \textbf{do} \\
   \quad\quad\quad\quad $(W_a^0,W_a^1)\gets \mathsf{GbBwNOT}(W_i^0,W_i^1)	$									 &\quad\quad run the related gate verification algorithm \\ 
																									 &   \quad\quad \textbf{if} it the gate verification returns $0$ \textbf{then} \\
   \quad  \quad \textbf{if} $i \in f.\mathsf{XORGates}$ \textbf{then} 				            								  &    \quad\quad\quad\textbf{return}  $0$ \\
   	       \quad \quad  \quad $(a,b) \gets i.\mathsf{GateInputs}$													   &  \quad\quad \textbf{else}  \\
       \quad \quad \quad \textbf{if} $a.\mathsf{Type}=\mathsf{TypeS}$  \textbf{then}                            							 &  \quad\quad\quad assign what verification returned as $(W_i^0, W_i^1)$ \\
     \quad \quad  \quad \quad$(W_a^0,W_a^1,W_b^0,W_b^1)\gets \mathsf{GbITXOR}(W_i^0,W_i^1)$  						  &   \quad \textbf{return}  $1$   \\
        																									 &   \\
          \quad  \quad \textbf{else} 																			  &  \\
              \quad \quad \quad $(a,b) \gets \mathsf{GateInputs}(f, i)$ 													 &   \\                          
	\quad \quad \quad \textbf{if} $a.\mathsf{Type}=\mathsf{TypeS}$  	\textbf{then}									  &    \\
     \quad \quad  \quad \quad $(W_a^0,W_a^1,W_b^0,W_b^1)\gets \mathsf{GbITAND}(W_i^0,W_i^1)$ 						&  \\
																									&  \\
 \quad \quad  label all $\mathsf{TypeS}$ input wires of gate $i$ as $\mathsf{TypeB}$ 												& \\
																									 &   \\                
 \quad $\{finalizing\text{ }and \text{ }return \}$                                                                                                                                              &   \\
\quad \textbf{for} $i \in f.\mathsf{Inputs}$ \textbf{do}  															  & \\
 \quad \quad  $e_i \gets (W_i^0,W_i^1)$ 																		  &\\
\quad \textbf{for} $i \in f.\mathsf{Outputs}$  \textbf{do}															  &  \\
 \quad \quad  $d_i \gets (H(W_i^0, i), H(W_i^1, i))$ 																  &\\
 \quad \textbf{return}  $(\hat{F}, \hat{e},\hat{d})$      																&\\
\hline
\end{tabular}
\caption{Our authenticity-oriented garbling scheme AuthOr.}
 \label{our_garbling}
\end{table}

\textbf{Initialization of our $\mathsf{Gb}$ algorithm.} The $\mathsf{Gb}$ algorihtm starts by choosing a global offset $\Delta$. The type of each wire is initially set by $\mathsf{Gb}$ as $\mathsf{TypeS}$ if that wire is input to only at most a \emph{single} gate.  Otherwise, it is initially set as  $\mathsf{TypeM}$. Here, we do not propose a method for this labeling, as there are simple ways to do this in $O(g)$ that are derived from generic topological sorting algorithms. 

\textbf{Forward and backward phases of $\mathsf{Gb}$.} In our solution, we attempt to use IT garbling in as many gates as possible. To solve the wire key leakage issue of IT garbling mentioned in Section \ref{gates}, our garbling algorithm $\mathsf{Gb}$ has two phases: a \emph{forward} phase i.e., garbling in the topological order with $\mathsf{GbFwNOT}$, $\mathsf{GbHGz}$, and $\mathsf{GbFreeXORz}$ garbling with a global offset $\Delta$; and a \emph{backward} phase, i.e., garbling in the inverse topological order with $\mathsf{GbBwNOT}$, $\mathsf{ITAND}$, and $\mathsf{ITXOR}$. 

\textbf{Determination of the gate garbling algorithms.}  The gate garbling algorithm chosen for a gate is based on the types of its input wires. During the forward phase, only the gates that would result in a problem for IT garbling (if they were left to the backward phase) are garbled, and the offsets of input and output wires of those gates are set as $\Delta$. The gates with a $\mathsf{TypeM}$ input wire cannot be left to the backwards phase, as the IT garbling results in the leakage of both keys for a wire, which is known to be a problem if that wire is input to more than one gate.  Also, during the forward phase of $\mathsf{Gb}$, when the keys of a wire is \emph{determined} with offset $\Delta$, we cannot leave that wire to the backward phase, as the leakage of both keys means leakage of $\Delta$ and the compromise of the security of all wires with the offset $\Delta$. Hence, during the forward phase $\mathsf{Gb}$ replaces such wire types with $\mathsf{TypeF}$ and also garbles each gate that those wires are input. $\mathsf{Gb}$ leaves for the backward phase only the gates whose both inputs are $\mathsf{TypeS}$  input wires, when the time of garbling that gate comes. We highlight that the keys for these wires cannot be set later during the forward phase, due to the following theorem:

\begin{theorem} \label{type_sing}
Once a wire that is input to a gate $i$ is labeled as $\mathsf{TypeS}$ by the $\mathsf{Gb}$ algorithm, then during the forward phase, its label may be changed into  $\mathsf{TypeF}$ only when a gate $j\leq i$ is being garbled.
\end{theorem}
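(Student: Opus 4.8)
The plan is to argue directly from the control flow of the forward phase of $\mathsf{Gb}$, pinning down the unique code location at which any wire acquires the label $\mathsf{TypeF}$, and then combining the definition of $\mathsf{TypeS}$ with the topological numbering of the circuit. The whole statement reduces to a small invariant about how labels evolve, followed by a two-case analysis.

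First I would observe that, within the forward phase, the only statement that assigns the label $\mathsf{TypeF}$ to a wire is the block executed at the end of processing each gate: when a gate $j$ is garbled and has a $\mathsf{TypeM}$ or $\mathsf{TypeF}$ input wire, the inner loop relabels each $p\in\{a_j,b_j,j\}$ (the two input wires $a_j,b_j$ and the output wire $j$ of gate $j$) with $p.\mathsf{Type}\neq\mathsf{TypeF}$ as $\mathsf{TypeF}$. In particular, no statement ever reassigns a wire to $\mathsf{TypeS}$ or $\mathsf{TypeM}$, so a wire initially labeled $\mathsf{TypeS}$ retains that label until (if ever) it is relabeled $\mathsf{TypeF}$. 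Consequently, a relabeling of our wire $w$ can occur only while some gate $j$ with $w\in\{a_j,b_j,j\}$ is being garbled, and the hypothesis ``$w$ is $\mathsf{TypeS}$'' remains valid exactly up to that moment.

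Next I would fix a wire $w$ that is input to gate $i$ and carries the label $\mathsf{TypeS}$, and suppose it is relabeled $\mathsf{TypeF}$ while gate $j$ is processed, so that $w$ is an input wire or the output wire of $j$. If $w$ is the output wire of $j$, then $j=w$ because each gate is identified by its output wire index; since $w$ feeds into gate $i$ and a gate's output cannot be its own input in a DAG, the topological numbering forces $j=w<i$, hence $j\leq i$. If instead $w$ is an input wire of $j$, then $w$ is an input to both $j$ and $i$; but by definition a $\mathsf{TypeS}$ wire is input to at most one gate, so $j=i$, and again $j\leq i$. In either case $j\leq i$, which is precisely the claim.

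The main obstacle is really the first step: verifying that the relabeling block is genuinely the sole place where $\mathsf{TypeF}$ is introduced, and that a $\mathsf{TypeS}$ label is never overwritten by anything other than $\mathsf{TypeF}$. Once this invariant is secured, the case analysis is immediate, relying only on the gate-indexing convention (a gate is named by its output wire index) and the fact that a feeding wire precedes its consuming gate in topological order. I would state this invariant explicitly before the case split, so that the topological argument in the second step can invoke the unaltered $\mathsf{TypeS}$ property of $w$ without further justification.
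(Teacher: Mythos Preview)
Your proposal is correct and follows essentially the same approach as the paper's own proof: both identify that the only label transition out of $\mathsf{TypeS}$ is to $\mathsf{TypeF}$ and occurs solely at the relabeling block attached to the gate currently being processed, then dispatch the two cases (the wire is an input of $j$, forcing $j=i$ by the single-fan-out meaning of $\mathsf{TypeS}$; or the wire is the output of $j$, forcing $j<i$ by topological order). Your write-up is in fact somewhat more explicit than the paper's in isolating the invariant before the case split, but the underlying argument is the same.
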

\begin{proof}

Wires are initially labeled as either \( \mathsf{TypeS} \) or \( \mathsf{TypeM} \), and the algorithm never reassigns a wire to \( \mathsf{TypeS} \) later. Similarly, a wire labeled \( \mathsf{TypeS} \) cannot become \( \mathsf{TypeM} \), since \( \mathsf{TypeM} \) is only assigned at initialization.

A \( \mathsf{TypeS} \) wire becomes \( \mathsf{TypeF} \) only when it is involved in the garbling of a gate, either as an input or as an output. If \( a \) is input to multiple gates, it cannot be labeled \( \mathsf{TypeS} \) initially. If it is an output of some gate \( j > i \), this violates the topological ordering (since \( a \) is also an input to gate \( i \)), and so such a gate cannot exist. Hence, only gates \( j \leq i \) can change the label of \( a \) from \( \mathsf{TypeS} \) to \( \mathsf{TypeF} \).
\qed \end{proof}

Somewhat surprisingly, even the gates whose input wires are $\mathsf{TypeS}$ but output wires are converted to $\mathsf{TypeF}$ by garbling another gate afterward can be left to the backward phase, as the offset of an output wire is perfectly hidden in IT garbling. We note that an AND gate that is garbled during the forward phase may still benefit from the efficiency improvements due to our $\mathsf{HG0}$ or $\mathsf{HG1}$ if at least one of its inputs is not $\mathsf{TypeF}$ before garbling this gate. Also, the garbled circuit $\hat{F}$ is determined during this phase, leaving only the complete determination of input and output keys to the backward phase. During the backward phase, all the gates that are left from the forward case are garbled via IT garbling. Note that these are the gates with both input wires are $\mathsf{TypeS}$ so their keys can be set by $\mathsf{GbITAND}$ and $\mathsf{GbITXOR}$ without any problem. At the end, all wires become either $\mathsf{TypeF}$ (each with global offset $\Delta$) or $\mathsf{TypeB}$ (each with arbitrary offset). We highlight that if the given circuit $f$ is a boolean formula, all the gates are garbled in the backward phase, and our scheme is equivalent to IT garbling.

Figure \ref{fig:exampleGarbled} compares the number of $\mathsf{HG2}$ runs employed in garbling a generic circuit for both the HG garbling approach \cite{ZRE15} and AuthOr. In the HG approach, all 4 gates of the given circuit are garbled using $\mathsf{HG2}$, resulting in 4 $ct$s. In contrast, the AuthOr garbling algorithm applies only one $\mathsf{HG2}$ execution, leading to the generation of a single $ct$.

\begin{figure}[t]
    \centering
    \begin{subfigure}{0.40\textwidth}
        \centering
        \includegraphics[width=\linewidth]{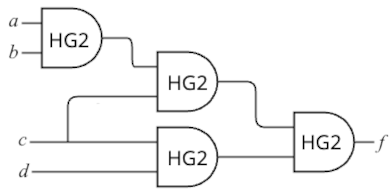} 
        \caption{HG Garbled}
        \label{fig:hggarbled}
    \end{subfigure}
    \hspace{30pt} 
    \begin{subfigure}{0.40\textwidth}
        \centering
        \includegraphics[width=\linewidth]{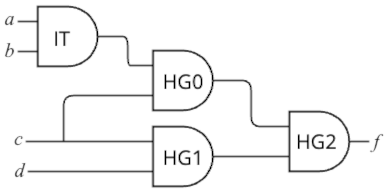} 
        \caption{AuthOr Garbled}
        \label{fig:authorGarbled}
    \end{subfigure}
    \caption{The impact of using $\mathsf{HG0}$, $\mathsf{HG1}$ and $\mathsf{ITAND}$ on the garbled circuit. The circuits are in topological order, left to right .}
    \label{fig:exampleGarbled}
\end{figure}

\textbf{Our $\mathsf{Ev}$, $\mathsf{En}$, $\mathsf{De}$, and $\mathsf{Ve}$ algorihtms.} The evaluation algorithm $\mathsf{Ev}$ just runs in the topological order and applies the related evaluation algorithms given in Table \ref{gate_evaluation}. What this algorithm needs is only to have information on which gate garbling scheme is used for each gate, which is determined using the same method $\mathsf{Gb}$ uses. All the gates garbled in the forward phase of $\mathsf{Gb}$ phase can be evaluated with the corresponding evaluation algorithm. The encoding algorithm $\mathsf{En}$  is just collecting the keys for the circuit input wires as in \cite{ZRE15}. The decoding algorithm $\mathsf{De}$ collects the hash outputs of the keys for the circuit output wires as in \cite{ZRE15}, where the hash algorithm ensures that even with the decoding data, the evaluator is prevented from finding both truth value keys for the output wire. The verification algorithm $\mathsf{Ve}$ finds the gate garbling algorithm that is used in each gate and then, in topological order, calls the gate verification algorithms in Figure \ref{gate_garbling}. We highlight that there is no need for checking the same global offset $\Delta$ is used in each gate.  Instead, we check whether the incoming wires of a gate have the same offset, which is sufficient to ensure that any set of forward-garbled connected\footnote{We slightly abuse the terminology as two gates can be connected via sharing any wires with each other or they are both connected to another gate.} gates have wires with the same local offset. Two such unconnected forward-garbled gate sets could have separate local offsets (although for simplicity, our algorithm does not allow it) and still pass the verification without sacrificing verifiability. 
 \begin{theorem}
 The AuthOr garbling scheme is an authenticity-oriented garbling scheme with correctness, authenticity, and verification properties if the hash function $H$ has circular correlation robustness (CCR).
 \end{theorem}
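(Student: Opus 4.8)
The plan is to establish the three required properties separately, reducing the cryptographic content to the authenticity and verifiability of HG garbling \cite{ZRE15} (which hold under CCR) together with the information-theoretic guarantees of IT garbling \cite{KP17}. First I would prove correctness by structural induction over the gates in topological order, and then treat authenticity and verifiability by a reduction that isolates the forward-phase (HG-based) and backward-phase (IT-based) sub-circuits.

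For correctness, the induction hypothesis is that when a gate $i$ is reached during $\mathsf{Ev}$, the evaluator holds active keys $W_a = W_a^{w_a}$ and $W_b = W_b^{w_b}$ as produced by $\mathsf{Gb}$. I would then verify, by a case analysis over the garbling/evaluation pairs $(\mathsf{GbHG2},\mathsf{EvHG2})$, $(\mathsf{GbHG1}/\mathsf{GbHG0},\mathsf{EvHG0\&1})$, $(\mathsf{GbFreeXORz},\mathsf{EvXOR})$, $(\mathsf{GbITAND},\mathsf{EvITAND})$, $(\mathsf{GbITXOR},\mathsf{EvXOR})$, and the NOT pairs, that $W_i = W_i^{w_i}$. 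The only points needing care are: (i) that $\mathsf{Ev}$ selects the same gate-garbling branch that $\mathsf{Gb}$ used, which holds because both derive the branch purely from the wire types and these are assigned identically; and (ii) that the forward and backward phases yield a globally consistent key assignment, which is exactly what Theorem \ref{type_sing} provides, since no $\mathsf{TypeS}$ input wire of a gate can have its keys fixed by a later forward gate and thus remains available to the backward phase.

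For authenticity, I would argue that no PPT adversary, given $\hat{F}$ and $\hat{X}$ for a fixed $x$, can output a key hashing to the wrong decoding value at any output wire. The reduction has three parts. First, I would show that $\mathsf{HG1}$ and $\mathsf{HG0}$ are as authentic as $\mathsf{HG2}$ under CCR: the only change is that the free wire key $W_b^0$ is chosen as $H(i,W_a^0)\oplus H(i,W_a^1)$ so that the suppressed ciphertext equals $0^\ell$, and the adversary's view is distributed exactly as an $\mathsf{HG2}$ instance whose ciphertext happens to take this value, so the same CCR argument that prevents deriving $H(i,W_a^{1-w_a})$, and hence the wrong output key, applies. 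Second, I would invoke the information-theoretic authenticity of the backward sub-circuit from \cite{KP17}: because $\mathsf{GbITAND}$ and $\mathsf{GbITXOR}$ are applied only to gates both of whose inputs are $\mathsf{TypeS}$ (fan-out one), the known leakage of both keys of an input wire is harmless, and crucially the offset of every output wire is perfectly hidden, so no backward gate, even one whose output wire is a $\mathsf{TypeF}$ wire carrying the global offset $\Delta$, ever exposes $\Delta$. Third, I would compose the two by tracing any claimed forgery backward from the output: producing a wrong key at wire $i$ requires either a wrong key at an input of gate $i$ (handled by induction) or breaking gate $i$ itself (bounded by CCR for HG gates and by an information-theoretic term for IT gates), while the final decoding check $H(W_i^b,i)$ additionally forces a hash collision or preimage that CCR rules out.

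Verifiability would follow along the same lines: I would exhibit the extractor $\mathsf{Ext}$ that runs the per-gate verification algorithms of Table \ref{gate_verif} and argue that whenever $\mathsf{Ve}=1$ these recompute a unique consistent key pair for every wire, so the evaluation is input-independent up to $f$ (condition~1 of \cite{JKO13}) and matches $\mathsf{Ext}$ (condition~2). Here I must justify the design choice that $\mathsf{Ve}$ checks only that the two incoming wires of each gate share a local offset rather than a single global $\Delta$; the argument is that per-gate agreement propagates equality of offsets along any connected set of forward-garbled gates, which is all verifiability needs. The hard part will be the forward/backward interface in the authenticity reduction: rigorously showing that the IT-garbling leakage stays contained, that it never reaches a reusable wire and never reveals $\Delta$ when a backward-garbled gate shares a wire with, or feeds into, a forward-garbled gate, and that the localized per-gate offset verification is as strong as a global check. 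This containment rests entirely on the $\mathsf{TypeS}$ invariant of Theorem \ref{type_sing} and the perfect hiding of output-wire offsets in IT garbling, so the bulk of the remaining work is turning these structural facts into a clean hybrid over the gates that bottoms out in the authenticity of HG garbling \cite{ZRE15} under CCR and the authenticity definition of \cite{BHR12}.
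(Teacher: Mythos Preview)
Your proposal is correct and follows the same high-level decomposition as the paper: forward HG-based ``peninsulas,'' backward IT-based ``sea,'' with the interface between them as the crux. Two technical points deserve tightening against the paper's treatment.

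First, your argument that $\mathsf{HG1}/\mathsf{HG0}$ inherit authenticity from $\mathsf{HG2}$ because the adversary's view ``is distributed exactly as an $\mathsf{HG2}$ instance whose ciphertext happens to take the value $0^\ell$'' is not a valid reduction: conditioning on $F_i=0^\ell$ is a probability-$2^{-\ell}$ event, and the original HG proof relies on the independent uniformity of $W_b^0$, which this conditioning destroys. The paper instead gives an explicit transformation in the other direction: given an HG challenge $(W_a,W_b,F_i)$, set $W_b^*\gets W_b\oplus F_i$ and delete $F_i$ from $\hat F$. Because in $\mathsf{HG2}$ one may equivalently view $F_i$ as uniform and $W_b^0=H(i,W_a^0)\oplus H(i,W_a^1)\oplus F_i$, this yields exactly the $\mathsf{HG1}$ distribution, so any $\mathsf{HG}^*$ forger becomes an HG forger with identical advantage.

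Second, where you plan a single gate-by-gate hybrid across all of AuthOr, the paper isolates the interface as a standalone lemma (``IT$^*$''): IT garbling remains information-theoretically authentic even when an adversarially chosen subset of its output wires are forced to share a common offset $\Delta$. This is proved by an exhaustive case analysis over gate type and input bits showing that the distribution of the revealed input keys $(X_a,X_b)$ is independent of $\Delta$. That lemma is exactly your ``perfect hiding of output-wire offsets,'' but formulating it as a property of the backward phase alone lets the composition be argued modularly (IT$^*$ on the sea, HG$^*$ on the peninsulas) rather than via a hybrid that has to thread the $\Delta$-leakage invariant through every gate.
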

We show the security of our scheme in Section \ref{proof}.

\section {Security of AuthOr} \label{proof}

 In what follows show the authenticity and verification properties AuthOr, as correctness simply follows from the correctness of the gate garbling schemes and the use of the same wire labelling for $\mathsf{Gb}$ and $\mathsf{Ev}$.
 \begin{theorem}
 The AuthOr garbling scheme satisfies authenticity if the hash function $H$ has circular correlation robustness (CCR).
 \end{theorem}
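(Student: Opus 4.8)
The plan is to reduce a successful authenticity attack against AuthOr to either breaking the CCR-based authenticity of HG garbling or to a purely information-theoretic event. First I would unpack what a forgery requires. Since $\mathsf{De}$ accepts an output bundle $\hat{Y}$ only if, for every output wire $i$, $H(Y_i, i)$ equals $h_0$ or $h_1$ where $(h_0, h_1) = (H(W_i^0, i), H(W_i^1, i))$, and since a forgery additionally demands $\hat{Y} \neq \mathsf{Ev}(f, \hat{F}, \hat{X})$, there must exist some output wire $i$ on which the adversary submits $Y_i' \neq Y_i = W_i^{y_i}$ (the honestly evaluated active key) with $H(Y_i', i) \in \{h_0, h_1\}$. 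I would first rule out two hash-level events using the collision and preimage resistance implied by CCR: a second preimage $Y_i' \neq W_i^{y_i}$ with $H(Y_i', i) = h_{y_i}$, and a spurious preimage of $h_{1-y_i}$ distinct from the genuine inactive key $W_i^{1-y_i}$. The remaining way to forge is then to produce the genuine inactive output key $W_i^{1-y_i}$ for some $i$, so the heart of the proof is to show that, given the adversary's view $(\hat{F}, \hat{X})$, every inactive output key is unpredictable except with negligible probability.

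Then I would split the output wires by the phase in which the gate producing them is garbled, relying on Theorem \ref{type_sing} to make the partition well-defined and on the phase structure described in Section \ref{author}. For a $\mathsf{TypeF}$ output wire, garbled in the forward phase, the inactive key equals $W_i^{y_i} \oplus \Delta$, so predicting it is equivalent to recovering the global offset $\Delta$ from the forward-garbled part of $\hat{F}$; this is exactly the situation governed by the authenticity of HG garbling. I would run a reduction embedding an HG-authenticity challenge into the forward-phase gates (using $\mathsf{GbHG2}$, $\mathsf{GbHG1}$, $\mathsf{GbHG0}$, $\mathsf{GbFreeXORz}$, $\mathsf{GbFwNOT}$), noting that $\mathsf{HG1}$ and $\mathsf{HG0}$ only set $F_i = 0^\ell$ and derive $W_b^0$ deterministically as $H(i, W_a^0) \oplus H(i, W_a^1)$, so under CCR their output-key distribution is as hard to invert as that of $\mathsf{HG2}$; hence a predictor for a $\mathsf{TypeF}$ inactive key yields an HG-authenticity breaker. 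For a wire garbled in the backward IT phase, the keys $W_i^0, W_i^1$ were drawn uniformly and independently, and because every gate deferred to the backward phase has both input wires of $\mathsf{TypeS}$ (fan-out one as gate inputs), the no-fan-out-2 condition under which Kondi--Patra prove IT garbling authentic is met; thus the inactive key is statistically hidden and guessed with probability at most $2^{-\ell}$.

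The one genuine obstacle is the interaction between the two phases along shared wires: I must argue that revealing all active keys together with the forward-phase ciphertexts leaks nothing about a backward-garbled inactive output key, and conversely that the IT gadgets (which emit no ciphertext but do fix algebraic relations among keys) do not help recover $\Delta$ for the $\mathsf{TypeF}$ wires. I would handle this with a hybrid that rewrites, wire by wire in inverse topological order, the inactive keys of the IT-garbled region as fresh independent randomness, justified by the perfect hiding of the local offset in $\mathsf{GbITAND}$ and $\mathsf{GbITXOR}$ and by the fact that a $\mathsf{TypeS}$ wire feeds at most one gate; after this rewriting the forward and backward regions are independent given the active path, so the two bounds above combine by a union bound over the polynomially many output wires to give the claimed negligible forgery probability. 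I expect the careful bookkeeping of which wires straddle the forward/backward boundary, and verifying that each hybrid step preserves the adversary's exact view, to be the most delicate part.
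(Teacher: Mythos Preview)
Your decomposition into a forward HG-based part and a backward IT-based part, with separate reductions and a boundary argument, is exactly the structure the paper uses. Two execution differences are worth noting. First, for the reduction of the forward region (your ``$\mathsf{HG1}$/$\mathsf{HG0}$ are as hard as $\mathsf{HG2}$'' step), the paper does not argue via CCR directly; instead it gives a concrete transformation: take an HG challenge $(\hat{F},\hat{X})$, evaluate it, and for each gate that AuthOr would garble with $\mathsf{HG1}$ or $\mathsf{HG0}$ absorb the ciphertext $F_i$ into the corresponding $W_b$-key (set $W_b^*\gets W_b\oplus F_i$) and delete $F_i$ from $\hat{F}$. This yields a pair that is information-theoretically identical in distribution to an honest HG$^*$ instance, so any HG$^*$ forger is an HG forger with the same advantage. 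Your embedding sketch is consistent with this but does not name the trick. Second, for the boundary, the paper does not run a hybrid over inactive IT keys; it instead isolates a variant IT$^*$ of Kondi--Patra in which the adversary may choose a set of output wires to share a common offset $\Delta$, and proves by an exhaustive case analysis over gate type and input bits that the distribution of the revealed input keys $(X_a,X_b)$ is independent of $\Delta$. This is precisely the statement needed to show the backward region leaks nothing about $\Delta$ to help the forward attack; your hybrid would have to establish the same $\Delta$-independence at each step, so the two routes converge on the same technical core.

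One small correction: CCR does not by itself give you collision or second-preimage resistance of $H$, so your first step (ruling out spurious hash matches in $\mathsf{De}$) is not justified as stated. The paper avoids this issue by reducing directly to the authenticity of the HG scheme of \cite{ZRE15} (whose own authenticity proof already accounts for the hashed $\mathsf{De}$), rather than factoring through ``predict the inactive key''. If you keep your framing, you should either assume collision resistance separately or, cleaner, fold the $\mathsf{De}$ hash check into the HG-authenticity reduction as the paper implicitly does.
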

  \begin{proof}

  The proof that we provide here depends on the already proven security of the previous schemes, forward garbled (FreeXOR and HG) and backward garbled (IT), and the security of their intersection. During the forward phase of $\mathsf{Gb}$, if the keys for a wire $i$ have been determined (turning that wire into $\mathsf{TypeF}$), then the keys for other input and output wires of a gate that takes $i$ as input are also determined (set as $\mathsf{TypeF}$). We deduce that at the end of a $\mathsf{Gb}$ execution, the $\mathsf{TypeF}$ wires look like ``peninsula''(s) leaning to the output of the circuit in a sea of $\mathsf{TypeB}$ wires, which is visualized in Figure \ref{peninsulas}.
 
The$\mathsf{TypeB}$ wire sea is already proven to leak no information (including XOR, AND, and NOT gates) about the other output wire keys than the expected ones to be obtained during $\mathsf{Ev}$ by \cite{KP17}. Regarding the  $\mathsf{TypeF}$ wire peninsulas, they are a simple extension HG$^*$ of authenticity-oriented HG garbling scheme of \cite{ZRE15} that we define here. HG$^*$ garbles only in the forward direction, including $\mathsf{FwNOT}$, which is already shown to have authenticity by \cite{Kol05}, $\mathsf{HG2}$ of \cite{ZRE15}, our proposal $\mathsf{HG0}$ and $\mathsf{HG1}$, and XOR garbling algorithms $\mathsf{FreeXOR0}$,  $\mathsf{FreeXOR1}$, and $\mathsf{FreeXOR2}$. In what follows, we show this scheme itself has authenticity:

\begin{lemma}
HG$^*$ garbling scheme (with $\mathsf{HGz}$, $\mathsf{FreeXORz}$, and $\mathsf{FwNOT}$ for $z\in\{0,1,2\}$) has authenticity if the authenticity-oriented HG garbling scheme of \cite{ZRE15}  (with $\mathsf{HG2}$, $\mathsf{FreeXORz}$, and $\mathsf{GbFwNOT}$, without $\mathsf{HG0}$, $\mathsf{HG1}$) has authenticity.
\end{lemma}

   \begin{proof}
We construct a PPT adversary $\mathcal{B}$ that can break the authenticity of HG garbling (played with a challenger $\mathcal{C}$), by using as a subroutine a PPT adversary $\mathcal{A}$ that breaks the authenticity of HG$^*$ garbling scheme.
\begin{enumerate}
\item $\mathcal{B}$ starts execution of $\mathcal{A}$. $\mathcal{A}$ picks a function $f$ and an input string $\hat{x}$ and outputs them to $\mathcal{B}$, which in turn, outputs them to $\mathcal{C}$.
\item $\mathcal{B}$ obtains the challenge garbled circuit $\hat{F}$ and input $\hat{X}$ from $\mathcal{C}$.
\item $\mathcal{B}$ executes the evaluation algorithm of HG garbling to obtain the key $W_i$ of each wire $i$.
\item $\mathcal{B}$ finds the set $\hat{G}$ of AND gates that would be garbled with $\mathsf{HG0}$ and $\mathsf{HG1}$ in a normal execution of the garbling algorithm of HG$^*$. Here, we assume the ones that would be garbled with $\mathsf{HG1}$ would have the left wire $a$ as $\mathsf{TypeF}$ \footnote{For other circuits, it is trivial to find an equivalent that fits into this criterion as AND gates are symmetric}.
\item $\mathcal{B}$ executes the following updates on $\hat{F}$ and $\hat{X}$.
\begin{algorithmic}
\State \textbf{for} each gate $i \in f \text{ }\{in\text{ } topo.\text{ } order\}$ \textbf{do}   
\State \quad     \textbf{if} $i \in $G$ $ \textbf{then}
\State \quad \quad $\{a, b\} \gets i.\mathsf{GateInputs}$, Set $W_b^*\gets W_b \oplus F_i$, Delete $F_i$ from $\hat{F}$

\State \quad \quad Set each $W_b\in \hat{X}$ as the freshly generated key $W_b^*$
\end{algorithmic}
The updated $\hat{F}$ and $\hat{X}$ are information-theoretically indistinguishable from what could be obtained in an honest execution of HG$^*$ (explained below). 
\item $\mathcal{B}$ gives $\hat{F}$ and $\hat{X}$ to $\mathcal{A}$.
\item $\mathcal{A}$ gives the output  $\hat{Y}$ to $\mathcal{B}$ , which, in turn gives it to $\mathcal{C}$. 
\end{enumerate}

  \begin{figure}[t]
\includegraphics[width=\textwidth]{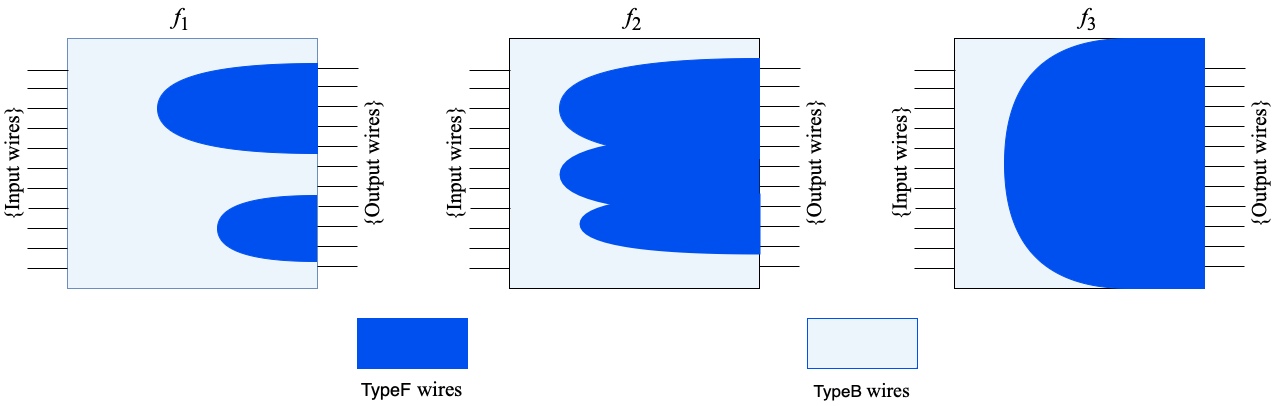}
\caption{Example formations of $\mathsf{TypeF}$ wires and $\mathsf{TypeB}$ wires upon $\mathsf{Ga}$ executions on example $f_1$, $f_2$, and $f_3$ (left to right in topological order).}
\label{peninsulas}
\end{figure}
In an honest execution of HG$^*$, for a given $W_a^0$, $\mathsf{GbHG1}$ sets $W_b^0$ as $H(i,W_a^0) \oplus H(i,W_a^1)$. Here, $\mathcal{B}$ additionally picks $F_i$ uniformly random and sets $W_b^0$ as $H(i,W_a^0) \oplus H(i,W_a^1)\oplus F_i$. On the other hand, the HG garbling picks $W_b^0$  uniformly random, and then uses $\mathsf{GbHG2}$ to set $F_i$ as $H(i,W_a^0) \oplus H(i,W_a^1)\oplus W_b^0$. Obtained $W_b^0$ and $F_i$ are information-theoretically indistinguishable from each other. The probability that $\mathcal{A}$ comes up with an authentic $\hat{Y}$ that is not equal to what could be obtained in honest execution of the evaluation algorithm on the given $\hat{F}$ and $\hat{X}$ is exactly the same for $\mathcal{B}$ also does so. Hence, if it is non-negligible, $\mathcal{B}$ breaks the authenticity of HG with non-negligible probability.
   \qed \end{proof}
So far, we have shown that the peninsulas of $\mathsf{TypeF}$ wires have authenticity given that the garbling scheme of \cite{ZRE15} has authenticity (which is based on CCR of $H$ in their work), while \cite{KP17} already shows the surrounding sea of $\mathsf{TypeB}$ wires has information-theoretic authenticity. To conclude, we now show that the IT garbling scheme of \cite{KP17} could be altered so that an arbitrary number of circuit output wires can have the same offset without breaking the information-theoretic authenticity. We name the altered scheme as IT$^*$ garbling, which exactly has the garbling algorithms used in the backward phase of our $\mathsf{Gb}$. As in the generic IT garbling, IT$^*$ uses $\mathsf{ITAND}$, $\mathsf{ITXOR}$, and $\mathsf{BwNOT}$ algorithms, and is applicable to circuits with only $\mathsf{TypeS}$ wires. Additionally, it allows the adversary to choose a set of output wires, which will have the same offset, i.e., the XOR of the keys for each of those wires is fixed.
\begin{lemma}
 The IT$^*$ garbling scheme is information-theoretically authentic.
\end{lemma}

\begin{proof}
  
Any circuit $f$ with only $\mathsf{TypeS}$ wires will be a set of independent subcircuits $\{f_1, \ldots, f_n\}$, each having a 1-bit output unconnected from each other. Each subcircuit $f_i$ that the adversary does not choose to have the same offset is garbled independently, which is proven to be information-theoretically authentic by \cite{KP17}. Hence,  we will consider only the set of circuits $\{f'_1, \ldots, f'_m\}$ that the adversary chooses to have the same $\Delta$. Also,  via the reduction from \cite{KP17}, the authenticity of each of those $f'_i$ can be reduced to the authenticity of a one-gate circuit, i.e.,  its output gate's authenticity. Hence, we assume each of the circuits $f'_1, \ldots, f'_m$  has only one gate (it may be any of AND or XOR or NOT) and have the same  $\Delta$ at the output. We highlight that the IT$^*$ garbling scheme still has size-zero, i.e., the adversary receives only $\hat{X}$ (specifically not $\hat{F}$) upon garbling. Therefore, the information-theoretic authenticity can be proven by proving the same $\hat{X}$ may be generated with the same probability independent of $\Delta$ for any given circuit $f=\{f'_1, \ldots, f'_m\}$ where each $f'_i$ is a gate and its input $\hat{x}$ whose bits are $\mathsf{TypeS}$. We show this gate-by-gate for each of AND, XOR, and NOT gates.

\textbf{AND gate.} Given a gate $i$, with input wires $a$ and $b$, the garbling algorithm picks $W_i^0\twoheadleftarrow\{0,1\}^{\ell}$ uniformly random, sets $W_i^1\gets W_i^0\oplus\Delta$ and then calls $\mathsf{GbITAND}$ to obtain the wire keys $W_a^0$, $ W_a^1$, $W_b^0$, and $W_b^1$. Then, either $W_a^0$ or $W_a^1$ is sent as the wire key $X_a$ and either $W_b^0$ or $W_b^1$ is sent as the wire key $X_b$, depending on the values of $w_a$ and $w_b$. We will compare for all combinations of $w_a$ and $w_b$, the probability of obtaining the same keys ${X}_a$ and ${X}_b$ using $\Delta_0$ with the one using $\Delta _1$.

\emph{Case $w_a=0$ and $w_b=0$.} The garbling algorithm picks $W_i^0$  uniformly random, and sets $ (W_a^0, W_b^0) \gets (W_i^0 , W_i^0)$  via $\mathsf{GbITAND}$.   Hence, any given pair of $X_a=W_a^0$ and $X_b=W_b^0$  (the given two keys for $w_a=0$ and $w_b=0$) could be generated with equal probability using $\Delta_0$ or $\Delta _1$.

\emph{Case $w_a=1$ and $w_b=0$.} The garbling algorithm picks $W_i^0\twoheadleftarrow \{0,1\}^\ell$ uniformly random and sets $ W_b^0 \gets W_i^0$ via $\mathsf{GbITAND}$.  $\mathsf{GbITAND}$ also picks $ W_a^1\twoheadleftarrow \{0,1\}^\ell$. Again, due to directly picking uniformly random, any given pair of $X_a=W_a^1$ and $X_b=W_b^0$ could be generated with equal probability using either $\Delta_0$ or $\Delta _1$.

\emph{Case $w_a=0$ and $w_b=1$.} The garbling algorithm picks $W_i^0\twoheadleftarrow \{0,1\}^\ell$  uniformly random, and sets $ W_a^0 \gets W_i^0$ via $\mathsf{GbITAND}$. 
 $\mathsf{GbITAND}$ also picks $ W_a^1\twoheadleftarrow \{0,1\}^\ell$ uniformly random, and sets  $ W_b^1\gets W_a^1 \oplus W_i^1$ where $W_i^1= W_i^0\oplus\Delta$. Any given pair of $X_a$ and $X_b$ could be generated with equal probability using either $\Delta_0$ or $\Delta _1$ due to the following reasons.   Due to directly picking $ W_i^0$ uniformly random,  $X_a=W_a^0=W_i^0$ could be generated with equal probabilities using either $\Delta_0$ or $\Delta _1$. Also, $ W_b^1$ is generated by XORing the uniformly picked $ W_a^1$ with some value $W_i^1$. Any element of $\{0,1\}^\ell$ could be generated as $ W_b^1$ with equal probability regardless of  $W_i^1$'s value, and then set as $ X_b=W_b^1$.
 
 \emph{Case $w_a=1$ and $w_b=1$.}  The garbling algorithm picks $W_i^0\twoheadleftarrow \{0,1\}^\ell$   uniformly random, and sets $ W_i^1  \gets W_i^0\oplus\Delta $, hence $W_i^1$ could get any value from $\{0,1\}^\ell$ with equal probability using either $\Delta_0$ or $\Delta _1$.
 $\mathsf{GbITAND}$ then picks $ W_a^1\twoheadleftarrow \{0,1\}^\ell$ uniformly random, and sets  $ W_b^1\gets W_a^1 \oplus W_i^1$. We deduce $ W_b^1=W_a^1 \oplus W_i^0\oplus \Delta$ Any given pair of $X_a$ and $X_b$ could be generated with equal probability using either $\Delta_0$ or $\Delta _1$ since $X_a$ directly comes from uniform picking of $ W_a^1$ and $X_b$ results from XORing some values with uniform picked $W_i^0$.

 $ W_a^0$ and $ W_b^1$ are prepared independently, so their probabilities can be considered separately.  Due to directly choosing $ W_a^0$ uniformly,  $X_a^0$ could be generated with equal probabilities using either $\Delta_0$ or $\Delta _1$. Due to the generation of $ W_b^1$ by XORing the uniformly randomly picked $ W_a^1$ with some value $W_i^1$, regardless of $W_i^1= W_i^0\oplus \Delta_0$ or $W_i^1= W_i^0\oplus \Delta_1$, any element of $\{0,1\}^\ell$ could be generated as $ W_b^1$ with equal probability.
 
 \textbf{XOR gate.} Similar to AND gate garbling, given a gate $i$, with input wires $a$ and $b$, the garbling algorithm picks $W_i^0\twoheadleftarrow\{0,1\}^{\ell}$ uniformly random, sets $W_i^1\gets W_i^0\oplus\Delta$ and then calls $\mathsf{GbXOR}$ to obtain  $W_a^0$, $ W_a^1$, $W_b^0$, and $W_b^1$. Then, either $W_a^0$ or $W_a^1$ is sent as $X_a$ and either $W_b^0$ or $W_b^1$ is sent as $X_b$, depending on the values of $w_a$ and $w_b$. We will compare for all combinations of $w_a$ and $w_b$, the probability of obtaining the same keys $X_a$ and $X_b$ using $\Delta_0$ with one using $\Delta _1$.
 
 \emph{Case $w_a=1$ and $w_b=1$.} The garbling algorithm picks $W_i^0\twoheadleftarrow \{0,1\}^\ell$ uniformly random. $\mathsf{GbITXOR}$ then picks $ W_a^1\twoheadleftarrow \{0,1\}^\ell$, and sets $W_b^1 \gets W_a^1 \oplus W_i^0$. As $W_a^1$ is picked uniformly and $W_b^1$ is randomized by independent choice of $W_i^0$, any given pair of $X_a=W_a^1$ and $X_b=W_b^1$  could be generated with equal probability using either $\Delta_0$ or $\Delta _1$.
 
 \emph{Case $w_a=1$ and $w_b=0$.} The garbling algorithm picks $W_i^0\twoheadleftarrow \{0,1\}^\ell$ uniformly random, and sets $W_i^1\gets W_i^0 \oplus\Delta$. $\mathsf{GbITXOR}$ then picks $ W_a^1\twoheadleftarrow \{0,1\}^\ell$, and sets $ W_b^0 \gets W_a^1 \oplus W_i^1$, which means $ W_b^0 = W_a^1 \oplus W_i^0 \oplus\Delta$. As $W_a^1$ is picked uniformly random, and $W_b^0$ is randomized by independent choice of $W_i^0$, any given pair of $X_a=W_a^1$ and $X_b=W_b^0$  could be generated with equal probability using $\Delta_0$ or $\Delta _1$.
 
 \emph{Case $w_a=0$ and $w_b=1$.} The garbling algorithm picks $W_i^0\twoheadleftarrow \{0,1\}^\ell$ uniformly random, and sets $W_i^1\gets W_i^0 \oplus\Delta$. $\mathsf{GbITXOR}$ then picks $ W_a^1\twoheadleftarrow \{0,1\}^\ell$, and sets $ W_b^1 \gets W_a^1 \oplus W_i^0$. It also sets $W_a^0\gets W_b^1 \oplus W_i^1$, which means $W_a^0= W_b^1 \oplus W_i^0 \oplus\Delta$. As $W_b^1$ is randomized by $W_a^1$, and $W_a^0$ is additionally randomized by the choice of $W_i^0 $, any given pair of $X_a=W_a^0$ and $X_b=W_b^0$  could be generated with equal probability using either $\Delta_0$ or $\Delta _1$.
 
  \emph{Case $w_a=0$ and $w_b=0$.} The garbling algorithm picks $W_i^0\twoheadleftarrow \{0,1\}^\ell$. $\mathsf{GbITXOR}$ then picks $ W_a^1\twoheadleftarrow \{0,1\}^\ell$. Via XOR arithmetic, it is easy to show that, upon execution of $\mathsf{GbITXOR}$, $W_a^0= W_a^1 \oplus \Delta$ and $ W_b^0 = W_a^1 \oplus W_i^0\oplus\Delta$.   As $W_b^0$ is randomized by $W_a^1$, and $W_a^0$ is additionally randomized by the choice of $W_i^0 $, any given pair of $X_a=W_a^0$ and $X_b=W_b^0$  could be generated with equal probability using either $\Delta_0$ or $\Delta _1$. 
  
  \textbf{NOT gate.} Given a gate $i$ with an input wire $a$, the garbling algorithm picks $W_i^0\twoheadleftarrow\{0,1\}^{\ell}$, sets $W_i^1\gets W_i^0\oplus\Delta$, and then swaps the order of the generated keys to obtain  $W_a^0$ and $ W_a^1$. As $W_a^1$ is set via directly picking $W_i^0$ uniformly random, in case of $w_a=1$, using either $\Delta_0$ or $\Delta _1$, $X_a$ could be obtained with equal probability. Also, as $W_a^0$ is randomized by  $W_i^0$, in case of $w_a=0$, again using either $\Delta_0$ or $\Delta _1$, $X_a$ could be obtained with equal probability.
  
We conclude that each of the gates $f'_1, \ldots, f'_n$ is garbled independently with independently picked randomizations, and the shown equality of probabilities that $X_a$ and $X_b$ is obtained from $\Delta_0$ and $\Delta_1$ extends to the whole set $\hat{X}$. \qed
\end{proof}   \end{proof}
    \begin{theorem}
 The AuthOr garbling scheme satisfies verifiability if the hash function $H$ has circular correlation robustness (CCR).
 \end{theorem}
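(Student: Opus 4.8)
The plan is to reuse the ``$\mathsf{TypeF}$ peninsulas in a $\mathsf{TypeB}$ sea'' decomposition from the authenticity proof and to exploit the defining feature of the gate verification routines in Table~\ref{gate_verif}: whenever a gate passes verification, the routine returns the \emph{canonical} output-key pair $(W_i^0,W_i^1)$, and the equality it checks is exactly the algebraic identity that the matching procedure of Table~\ref{gate_garbling} produces. First I would prove this gate-local statement for each of $\mathsf{VeHG2}$, $\mathsf{VeHG0\&1}$, $\mathsf{VeXOR}$, $\mathsf{VeNOT}$, and $\mathsf{VeITAND}$: passing the check forces the gate into a correctly garbled form, and then running the corresponding procedure of Table~\ref{gate_evaluation} with the keys for the real truth values $w_a,w_b$ yields precisely $W_i^{w_i}$, where $w_i$ is the gate's true output bit. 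For $\mathsf{HG0\&1}$ this is the two-line case analysis on $w_a,w_b$ using the forced relation $W_b^0=H(i,W_a^0)\oplus H(i,W_a^1)$; for $\mathsf{HG2}$ one additionally cancels $F_i$; and the XOR, NOT, and $\mathsf{ITAND}$ cases follow the same pattern and are immediate.

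Next, for condition~(1) I would lift the gate-local claim to the whole circuit by induction in topological order. Since $\mathsf{Ve}$ reads the input-wire keys from $\hat{e}$ exactly as $\mathsf{En}$ selects them, the base case agrees with an honest run of $\mathsf{Ev}$, and the inductive step is the gate-local claim. Hence, whenever $\mathsf{Ve}(\hat{e},\hat{F},f)=1$, for every input $\hat{x}$ an honest execution of $\mathsf{Ev}$ carries on each wire the key indexed by that wire's true value, so the returned garbled output is a function of $f(\hat{x})$ alone. For $\hat{x}_0,\hat{x}_1$ with $f(\hat{x}_0)=f(\hat{x}_1)$ the two evaluations therefore coincide, and the conjunction in condition~(1) cannot occur when verification accepts.

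For condition~(2) I would set $\mathsf{Ext}(\hat{F},\hat{e},y)$ to run the (deterministic, polynomial-time) procedure $\mathsf{Ve}$: on rejection it outputs anything, and on acceptance it has already recovered both keys $(W_i^0,W_i^1)$ for every output wire $i$, so it returns $\hat{Y}$ with $Y_i\gets W_i^{y_i}$. By the induction above these are exactly the keys that an honest $\mathsf{Ev}$ on any $\hat{X}=\mathsf{En}(\hat{e},\hat{x})$ with $f(\hat{x})=y$ outputs, so $\mathsf{Ext}(\hat{F},\hat{e},y)=\mathsf{Ev}(f,\hat{F},\hat{X})$ whenever $\mathsf{Ve}$ accepts.

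The step I expect to be the main obstacle is justifying that the purely \emph{local} offset checks ``$W_a^0\oplus W_a^1=W_b^0\oplus W_b^1$'' inside $\mathsf{VeHG2}$, $\mathsf{VeHG0\&1}$, and $\mathsf{VeXOR}$, rather than a single global $\Delta$ check, suffice at the interface where a forward-garbled ($\mathsf{TypeF}$) peninsula meets the backward-garbled ($\mathsf{TypeB}$) sea. Here I would invoke the observation stated before the theorem: the IT-garbled region hides any offset perfectly, so each connected forward-garbled component needs only an internally consistent local offset, and the verification of such a component reduces to the verifiability of the HG$^*$ scheme exactly as the authenticity lemma for HG$^*$ reduces to \cite{ZRE15} (under CCR of $H$), while the surrounding sea inherits the information-theoretic verifiability of \cite{KP17} through $\mathsf{VeITAND}$, $\mathsf{VeXOR}$, and $\mathsf{VeNOT}$. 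Combining the two components, consistency and extraction on each, glued along the shared local offset, then yields verifiability of AuthOr under CCR.
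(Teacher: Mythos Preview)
Your proposal is correct and follows essentially the same route as the paper: show that each gate-verification routine, when it accepts, pins down the unique canonical output-key pair $(W_i^0,W_i^1)$; chain this in topological order so that the garbled output depends only on $f(\hat{x})$ (giving condition~(1)); and build $\mathsf{Ext}$ by running $\mathsf{Ve}$ and outputting $W_i^{y_i}$ for each output wire (giving condition~(2)).

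The one point of divergence is your final paragraph. The paper's argument does not invoke the peninsula/sea decomposition, CCR, or any reduction to the authenticity of HG$^*$ for verifiability at all: once each gate-verification check deterministically fixes the two output keys from the two input-key pairs, the whole argument is purely combinatorial, and the ``interface'' you worry about is a non-issue. The local offset equality inside $\mathsf{VeHG2}$, $\mathsf{VeHG0\&1}$, and $\mathsf{VeXOR}$ is enough on its own because it is exactly what is needed to make the returned $(W_i^0,W_i^1)$ agree with what $\mathsf{Ev}$ computes on that single gate; no global $\Delta$ and no hash-security assumption enter. So your core three-step argument already finishes the proof, and the machinery you propose to add in the last paragraph is unnecessary (though not incorrect).
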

 \begin{proof}[Intuition]
Our verifiability algorithm $\mathsf{VE}$ calls gate verification algorithms from Table \ref{gate_garbling} in topological order based on the gate garbling schemes used. Each gate garbling algorithm (except for $\mathsf{VeNOT}$) ensures the output wire key has only two possible assignments (unless the hash function $H$ is shown to be insecure), one for truth value 0 and one for truth value 1.  $\mathsf{VeXOR}$ does so by ensuring the input wire offsets are the same, which means $W_i^0=W_a^0\oplus W_b^0=W_a^1\oplus W_b^1$ and $W_i^1=W_a^0\oplus W_b^1=W_a^1\oplus W_b^0$. The verification algorithms for  $\mathsf{HG}$ garbling also guarantee the input wire offsets $\Delta$ are the same, in addition to that $W_i^0=H(W_a^0)=W_b^0\oplus H(i,W_a^1)$ holds ($F_i$ is added for $\mathsf{HG2}$). As the inputs have the same offset $\Delta$, the XOR arithmetic ensures $W_i^1=W_i^0\oplus\Delta$. We have inherited $\mathsf{VeITAND}$ from \cite{KP17}, which has shown that the IT garbling scheme is verifiable. Also, all of the given gate verification algorithms return the two output wire keys of each gate till the end of the circuit. Hence, for output gates, there exist only two possible assignments. Regarding the first probability of verifiability, this ensures that the obtained wire keys will be equal, and this probability is always 0. Regarding the second probability of verifiability, we can modify $\mathsf{Ve}$ to build an $\mathsf{Ext}$. Upon obtaining all the output wire keys using $\mathsf{Ve}$, $\mathsf{Ext}$ finds the corresponding output keys to the bits of $y$ and returns them. Again, this ensures the obtained wire keys will be equal to the ones that come from evaluation, and this probability is always 0.
 \qed \end{proof}

\begin{table}[t]
\footnotesize
\centering
\begin{tabular}{|c|ccc|ccc|c|} 
\hline
& &  \cite{ZRE15}& & &\textbf{AuthOr} & &  \\ 
\cline{2-8}
{Circuit}&{$\mathsf{Gb}$ cost} & {$\mathsf{Ev}$ cost} & {GC Size} & {$\mathsf{Gb}$ cost} & {$\mathsf{Ev}$ Cost} & {GC Size} & {Size Gain (\%)} \\ 
\hline\hline
adder64 & 0.003 & 0.001 & 64 & 0.010 & 0.001 & 64 & 0.00 \\ \hline
sub64 & 0.002 & 0.001 & 64 & 0.003 & 0.001 & 64 & 0.00 \\ \hline
mult64 & 0.111 & 0.041 & 4034 & 0.116 & 0.042 & 3970 & 1.59 \\ \hline
mult2-64 & 0.224 & 0.085 & 8129 & 0.234 & 0.087 & 4034 & \textbf{50.38} \\ \hline
divide64 & 0.158 & 0.068 & 4665 & 0.169 & 0.070 & 4664 & 0.02 \\ \hline
udivide64 & 0.123 & 0.049 & 4286 & 0.128 & 0.051 & 4225 & 1.42 \\ \hline
and8 & 0.000 & 0.000 & 8 & 0.000 & 0.000 & 1 & \textbf{87.50} \\ \hline
zero-equal & 0.002 & 0.000 & 64 & 0.000 & 0.000 & 1 & \textbf{98.44} \\ \hline
FP-add & 0.140 & 0.049 & 5369 & 0.147 & 0.051 & 5348 & 0.39 \\ \hline
FP-floor & 0.016 & 0.006 & 645 & 0.017 & 0.006 & 644 & 0.16 \\ \hline
FP-ceil & 0.016 & 0.006 & 645 & 0.017 & 0.006 & 645 & 0.00 \\ \hline
FP-div & 2.090 & 0.746 & 80184 & 2.179 & 0.764 & 80163 & 0.03 \\ \hline
FP-sqrt & 2.350 & 0.844 & 89964 & 2.455 & 0.857 & 89964 & 0.00 \\ \hline
FP-f2i & 0.038 & 0.013 & 1464 & 0.042 & 0.014 & 1456 & 0.55 \\ \hline
FP-mul & 0.501 & 0.176 & 19426 & 0.523 & 0.180 & 19406 & 0.10 \\ \hline
FP-eq & 0.009 & 0.003 & 316 & 0.009 & 0.003 & 305 & 3.48 \\ \hline
FP-i2f & 0.064 & 0.022 & 2407 & 0.068 & 0.023 & 2407 & 0.00 \\ \hline
cmp32-unsigned-lt & 0.004 & 0.001 & 151 & 0.004 & 0.001 & 141 & \textbf{6.62} \\ \hline
cmp32-signed-lteq & 0.004 & 0.001 & 151 & 0.004 & 0.001 & 138 & \textbf{8.61} \\ \hline
 & & & & & & & Average: 13.67 \\ \hline 
\end{tabular}
\caption{Experiment results on arithmetic circuits of \cite{benchbf}. Computation costs and GC sizes are in terms of seconds and the number of generated ciphertexts, respectively.}
\label{tbl:exp_arith}
\end{table}

\begin{table}[t]
\footnotesize
\centering
\begin{tabular}{|c|ccc|ccc|c|} 
\hline
& & \cite{ZRE15} & & &\textbf{AuthOr} & &  \\ 
\cline{2-8}
{Circuit}&{$\mathsf{Gb}$ cost} & {$\mathsf{Ev}$ cost} & {GC Size} & {$\mathsf{Gb}$ cost} & {$\mathsf{Ev}$ Cost} & {GC Size} & {Size Gain (\%)} \\ 
\hline \hline
sha256 & 0.767 & 0.325 & 22574 & 0.823 & 0.334 & 22569 & 0.02 \\ \hline
aes128 & 0.210 & 0.087 & 6401 & 0.227 & 0.090 & 6401 & 0.00 \\ \hline
aes192 & 0.238 & 0.098 & 7169 & 0.256 & 0.103 & 7169 & 0.00 \\ \hline
Keccak-f & 1.211 & 0.474 & 38401 & 1.299 & 0.493 & 38401 & 0.00 \\ \hline
sha512 & 1.973 & 0.836 & 57948 & 2.133 & 0.857 & 57943 & 0.01 \\ \hline
aes256 & 0.292 & 0.121 & 8833 & 0.313 & 0.126 & 8833 & 0.00 \\ \hline
 & & & & & & & Average: 0.01 \\ \hline

\end{tabular}
\caption{Experimental results  on cryptographic circuits of \cite{benchbf}. Computation costs and GC sizes are in terms of seconds and the number of generated ciphertexts, respectively.}
\label{tbl:exp_crypto}
\end{table}

\begin{table}
\footnotesize
\centering
\begin{tabular}{|c|ccc|ccc|c|} 
\hline
& & \cite{ZRE15}  & & &\textbf{AuthOr} & &  \\ 
\cline{2-8}
{Circuit}&{$\mathsf{Gb}$ cost} & {$\mathsf{Ev}$ cost} & {GC Size} & {$\mathsf{Gb}$ cost} & {$\mathsf{Ev}$ Cost} & {GC Size} & {Size Gain (\%)} \\ 
\hline\hline
c5315 & 0.055 & 0.016 & 2012 & 0.044 & 0.014 & 1229 & \textbf{ 38.92 }\\ \hline
c1908 & 0.017 & 0.005 & 576 & 0.014 & 0.004 & 398 &  \textbf{30.90}\\ \hline
c432 & 0.004 & 0.001 & 156 & 0.003 & 0.001 & 83 & \textbf{ 46.79 }\\ \hline
c6288 & 0.065 & 0.020 & 2385 & 0.056 & 0.017 & 1650 & \textbf{ 30.82}\\ \hline
c7552 & 0.070 & 0.021 & 2491 & 0.060 & 0.019 & 1604 & \textbf{ 35.61 }\\ \hline
c17 & 0.000 & 0.000 & 7 & 0.000 & 0.000 & 1 & \textbf{ 85.71}\\ \hline
c2670 & 0.024 & 0.007 & 876 & 0.019 & 0.006 & 514 & \textbf{ 41.32 }\\ \hline
c499 & 0.003 & 0.001 & 91 & 0.002 & 0.001 & 49 & \textbf{ 46.15 }\\ \hline
c880 & 0.009 & 0.003 & 331 & 0.007 & 0.002 & 171 & \textbf{ 48.34 }\\ \hline
c1355 & 0.014 & 0.004 & 507 & 0.010 & 0.003 & 213 &\textbf{  57.99} \\ \hline
c3540 & 0.033 & 0.010 & 1169 & 0.029 & 0.009 & 784 &\textbf{  32.93} \\ \hline
 & & & & & & & Average: 45.05 \\ \hline
\end{tabular}
\caption{Experimental results on benchmark circuits of \cite{benchbch}. Computation costs and GC sizes are in terms of seconds and the number of generated ciphertexts, respectively.}
\label{tbl:exp_bench}
\end{table}

\section{Implementation and Experimental Results}\label{implem}

Our implementations of AuthOr as proof-of-concept and the state-of-the-art  HG garbling \cite{ZRE15} for comparison, both in Python 3 are given at \cite{our_imple}.  We have run tests on the arithmetic operation circuits of \cite{benchbf}, the cryptographic circuits of \cite{benchbf}, and benchmark circuits of \cite{benchbch}, all of which are constructed by independent researchers from us. For each circuit, we have repeated our tests 10 times on an Apple M2 MacBook Pro with 24 GB of RAM. We present the average experimental results in Table \ref{tbl:exp_arith}, \ref{tbl:exp_crypto}, and Table \ref{tbl:exp_bench}, which compare costs for garbling and evaluating computation times (measured as elapsed times in seconds) and the  GC size (measured as the total number of ciphertexts generated). As the bottleneck of garbling schemes is GC size \cite{ZRE15}, we specifically calculate the GC Size Gain of our scheme in the last columns of the given tables. We did not measure the overhead associated with the compilation and preprocessing steps required before garbling.  

\textbf{Preprocessing of the circuit.} In our implementation, we developed a compiler to parse circuits written in both supported syntaxes and to manage this emulation process. \cite{benchbf} implemented a buffer via an AND gate with both input pins connected to the same wire for some circuits.  Since these gates do not affect the circuit's mathematical functionality or the authenticity of garbling, we remove them in a preprocessing step. Additionally, some benchmarks include pins that are set as 0 through an XOR gate with a common input. For similar reasons, we also removed these XOR gates. Additionally, we employed the ``networkx'' library to determine the topological order of the gates in each Boolean circuit.  Our implementation supports two syntaxes for representing boolean circuits: \textit{Bristol Fashion} \cite{benchbf} and \textit{Bench}. In both formats, each circuit is represented by listing all internal Boolean gates and their pin connectivity. The set of gates discussed in Section \ref{gates} is Turing complete, allowing us to emulate all other types of gates, including those with a fan-in greater than 2, as well as OR, NOR, NAND, and XNOR gates.

\textbf{Garbling.} The types of wires can be determined during the forward phase, which our implementation takes advantage of. We made this separate in Table  \ref{our_garbling} only for the sake of clarity. As suggested by \cite{BHKR13,ZRE15}, the hash function $H$ can be implemented by using a symmetric key encryption algorithm with a shared fixed key $k$. We use  AES128 encryption in CBC mode (inherited from the ``Cryptography'' library) to set the hash function $H$ as:
$$H(i,W)=AES128_k(W)\oplus i\oplus W$$

\textbf{GC size comparison with HG garbling \cite{ZRE15}}.  As shown in Tables \ref{tbl:exp_arith}, \ref{tbl:exp_crypto}, and \ref{tbl:exp_bench}, the GC size gain is highly dependent on the architecture of the circuits. For cryptographic circuits of \cite{benchbf}, the gain is marginal because the complexity of gate connections requires our scheme to garble almost all the gates with $\mathsf{GaHG2}$ during the forward phase. In contrast, for arithmetic circuits of \cite{benchbf} and for benchmark circuits of \cite{benchbf}, the gain is up to roughly 98\% and 86\%, and averages at roughly 14\% and 45\%, respectively.

\bibliographystyle{plain}
\bibliography{main.bib}
\appendix

\section{Inefficiency of IT Garbling for Generic Circuits} \label{inefficiency_of_IT}

For fairness of comparison with IT garbling \cite{KP17}, we convert an example generic circuit (with size $g$) given in Figure \ref{ExampleCircuit} into a boolean formula by applying the technique mentioned in \cite{KP17} to convert a circuit into a boolean formula, when an input wire $i$ of the circuit is inputted to $n$ gates by replacing $i$ by $n$ wires. Note that each circuit layer has 3 gates, so the depth of the circuit is $g/3$. Starting from the last layer in inverse topological order, the procedure replaces each fan-out-$n$ gate $i$ with $n$ fan-out-1 gates, each of which has the same input wires as the gate $i$ had. The last layer ($g/3$)-th remains the same as it does not have any fan-out-2 gates. The ($g/3-1$)-th layer has 3 fan-out-$2$ gates, so is replaced with 6 gates. However, now the ($g/3-2$)-th layer has 3 fan-out-$4$ gates, so it needs to be replaced by 12 gates. At each lay the number of the gates doubles, hence we calculate the size $g'$ of the obtained boolean formula as $ g'= \sum_{j=0}^{g/3-1} 3\cdot 2^j=3\cdot 2^{g/3}-3$, which means $g'\in \Theta(exp(g))$.

\begin{figure}[t]
\centering
\includegraphics[width=\textwidth]{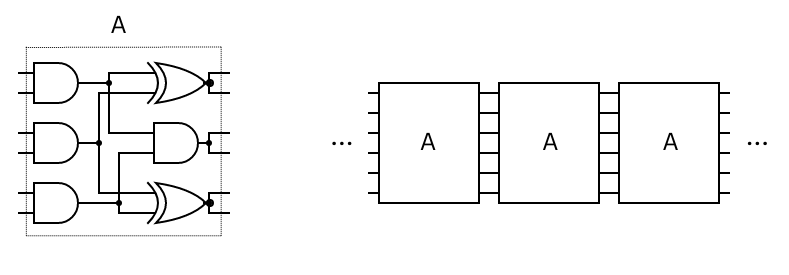}
\caption{An example circuit with size $g$ obtained by repeating $\mathsf{A}$-boxes $g/6$ times. $\mathsf{A}$-box construction is also shown. Circuits are left to right in topological order.}
\label{ExampleCircuit}
\end{figure}

%
%
%
%
\end{document}